\newtheorem{definition}{Definition}
\newtheorem{theorem}{Theorem}
\newtheorem{remark}{Remark}
\begin{document}

\title{Quantum ramp secret sharing from Haar scrambling}

\author{Kiran Adhikari}
\email{kiran.adhikari@tum.de}
\affiliation{Emmy Noether Group for Theoretical Quantum Systems Design, Technical University of Munich, Germany}

\maketitle

\begin{abstract}
Quantum information scrambling has emerged as a powerful tool for studying the dynamics of chaotic quantum many-body systems, assessing benchmarking protocols, and even investigating exotic black hole models \cite{r10,r11}. During quantum information scrambling, localized quantum information disperses across the entire system, hiding from the observers who can only access part of it. On the other side, we have a  fundamental cryptographic primitive called secret-sharing schemes, where a dealer shares a quantum secret with a group of parties, such that any subset of parties above a specific threshold size can reconstruct it. In this paper, we demonstrate that two protocols, Haar scrambling, which serves as a baseline for other scrambling techniques, and quantum secret sharing, are indeed equivalent.  The scheme one gets out of this is of a ramp secret-sharing nature rather than a threshold scheme. We expect this because of the inherent randomness present in Haar unitaries. Moreover, by varying the purity of the initial states, we demonstrate that it is possible to obtain all possible types of ramp secret-sharing schemes. 
Furthermore, utilizing complexity theoretic arguments, we argue that the protocol can be implemented efficiently, i.e., in polynomial time. Finally, we explore potential applications, ranging from security implications for distributed quantum networks to cryptography in the Noisy Intermediate-Scale Quantum (NISQ) era, and draw insights for fundamental physics, such as many-body physics and quantum black holes. 
\end{abstract}

\section{Introduction}

Quantum information scrambling has emerged as a crucial tool for studying complicated quantum systems \cite{Landsman:2018jpm, Schuster:2021uvg, r11} ranging from chaotic many-body quantum systems to exotic quantum black hole physics \cite{Sekino_2008, Hayden_2007} and facilitating various randomization and benchmarking protocols \cite{ doi:10.1126/science.abg5029, Blok_2021}. The process of quantum scrambling leads to the generation of global entanglement, causing initially localized quantum information to disperse throughout the system, thereby concealing it from observers who can only access part of it. Therefore, this phenomenon can also be seen as a form of "thermalization of the quantum information" and is crucial in understanding information spreading in quantum systems.

The literature on quantum information scrambling encompasses diverse forms and definitions, which can broadly be classified into three types. The first family emphasizes Hamiltonians exhibiting quantum chaotic features with out-of-time ordered correlators (OTOCs) being a popular diagnostic tool \cite{larkin, Maldacena_2016,Hashimoto_2017}. In this context, the decay of OTOCs serves as a signature of quantum information scrambling. The second family focuses on information-theoretic tools, such as mutual information and tri-partite information, to track the spread of information and, therefore, to define quantum information scrambling \cite{r11}. However, obtaining exact analytical results for these information theoretic quantities is challenging. The third family of models, where exact analytical results can be obtained, involves random unitaries drawn from the Haar measure, which is called Haar scrambling \cite{Sekino_2008}. Haar scrambling serves as a baseline for scrambling, as we expect the late-time values of entropy and mutual information for any scrambling unitary to match those of Haar scrambling. Furthermore, Haar scrambling falls under the category of chaotic scrambling, as the out-of-time ordered correlators (OTOCs) for Haar random unitaries exhibit characteristic decay behavior.

From the realm of many-body quantum systems, let us now traverse into the fascinating land of cryptography. Secret-sharing schemes serve as a fundamental primitive for various cryptographic protocols. In a threshold quantum secret sharing protocol \cite{r9}, a dealer shares a quantum secret with a group of parties such that any subset of parties above a specific threshold size can reconstruct it. In contrast, parties below the threshold size are not allowed to. Shamir \cite{shamir} and Blakley \cite{blakley} proposed the first classical secret sharing schemes, while various quantum secret sharing schemes were proposed in \cite{r7, Hillery_1999, r8, r9}. In the literature, various types of secret sharing schemes can be found. In a quantum threshold scheme $((k,n))$, an arbitrary quantum secret is divided into $n$ parties such that any set of parties larger than $k$ can perfectly reconstruct it while parties of size less than $k$ have no information about it \cite{r7}. Unlike in threshold secret sharing schemes, there also exists a ramp secret sharing protocol, where partial information about the secret is allowed to leak to a set of participants that cannot fully reconstruct the secret. In particular, there are parameters $1 \leq b < g \leq n$ such that the set of parties of size at least $g$ should be able to reconstruct the secret,  while sets of parties of size at most $b$ cannot learn anything about the secret. Furthermore, we make no requirement on the parties of size $g-b$. Blakley \cite{blakley} and Meadows \cite{ramp_scheme}  first introduced the classical ramp secret sharing scheme, which they later used to construct efficient and secure multiparty computation protocols \cite{ramp_scheme_2}. Compared to quantum threshold secret sharing, quantum ramp secret sharing has not been extensively explored \cite{Zhang_2014, MATSUMOTO_2018}. 

Intuitively, classical cryptography can be viewed as the scrambling and descrambling (encryption and decryption) of messages between two parties. Our goal in this paper is to have a quantum version of it. We primarily focus on Haar scrambling for two reasons. Firstly, it allows us to derive analytical results. Secondly, Haar scrambling serves as a baseline for other scrambling unitaries, enabling the extension of results obtained for Haar scrambling to other scenarios as well. A Haar scrambling unitary has the capacity to generate global entanglement, which implies that they can be a good encoder for quantum error correction codes, as implied by the decoupling theorem \cite{decoupling}.  Given that all quantum secret sharing schemes are variants of quantum error correcting codes \cite{r7, r8, Marin_2013}, it is natural to anticipate a connection between scrambling and secret sharing. In particular, bounds from standard as well as entanglement-assisted quantum error-correcting codes \cite{Grassl_2022, PhysRevA.76.062313} can be used to create various kinds of secret-sharing schemes. 

In this paper, we show that two protocols, Haar scrambling, and quantum secret sharing, are indeed equivalent. The scheme one gets out of this is of a ramp secret-sharing nature rather than a threshold scheme. The ramp nature is expected because of the inherent randomness in Haar scrambling. By changing the purity of the initial states, we show that it is possible to obtain all possible ramp secret-sharing schemes. Furthermore, the protocol can be efficiently implemented in polynomial time by approximating the scrambling unitary using $t$-design circuits. We also discuss some potential applications of our work, such as security applications in a distributed quantum network, cryptography in the Noisy Intermediate-Scale Quantum (NISQ) era \cite{Preskill_2018}, and drawing insights for fundamental physics domains, including many-body physics and quantum black holes. 
The structure of the paper is as follows. In Section \ref{sec:preliminaries}, we review concepts from quantum information theory relevant to the paper. In section \ref{sec:quantumInformationScrambling}, we review various notions of quantum information scrambling introduced in the literature. We also propose a novel definition of information scrambling based on quantum information theoretic quantities. In section \ref{sec:RampSecret}, we formalize the notion of ramp secret sharing and show that it is equivalent to Haar scrambling.
\section{Preliminaries}
\label{sec:preliminaries}
 In this section, we provide a brief review of the concepts from quantum information theory relevant to this paper. For a more comprehensive introduction to these concepts, interested readers can refer to the literature \cite{qcn,r6}. Let us consider a quantum system $X$ consisting of $m$ degrees of freedom modeled by the algebra of $m \times m$ matrices over the complex numbers, denoted by $\mathcal{M}_m$. The state of $X$ is described by its density matrix $\rho_X \in \mathcal{M}_m$. In classical information theory, Shannon entropy provides a measure of uncertainty of a random variable, whereas von Neumann entropy generalizes it to the quantum case. The von Neumann entropy of $X$ with density matrix $\rho_X$ is given by:
\begin{equation}
S(X) = -\operatorname{Tr} (\rho_X \log \rho_X) = -\sum_{j=1}^m \lambda_j \log \lambda_j,
\end{equation}
where $\lambda_1, \dots, \lambda_m$ are the eigenvalues of $\rho_X$. Using quantum entropy, one can introduce a quantum generalization of classical mutual information, called quantum mutual information, which gives a measure of the correlation between different quantum systems.  Let us consider two quantum systems, $X$ and $Y$, with density matrices $\rho_X$ and $\rho_Y$, respectively. The quantum mutual information between $X$ and $Y$ is defined as:
\begin{equation}
I(X:Y) = S(X) + S(Y) - S(XY),
\end{equation}
where $S(X)$, $S(Y)$, and $S(XY)$ are the von Neumann entropies of $X$, $Y$, and the joint system $XY$, respectively. In addition to the von Neumann entropy, there exists a quantum version of the classical R\'enyi entropy, called the quantum R\'enyi entropy. For a given quantum system $X$ with density matrix $\rho_X$, the R\'enyi entropy of order $\alpha$ is defined as:
\begin{equation}
S^\alpha(X) = \frac{1}{1-\alpha} \log \text{Tr}[\rho_X^\alpha],
\end{equation}
where $\alpha \geq 0$ and $\alpha \neq 1$. For integer values of $\alpha$, the R\'enyi entropy only involves integer powers and traces of the density matrices, making it easier to compute for physically relevant situations. Moreover, in the limit $\alpha \rightarrow 1$, the R\'enyi entropy reduces to the von Neumann entropy:
\begin{equation}
\lim_{\alpha \rightarrow 1} S^\alpha(X) = S(X).
\end{equation}
The R\'enyi entropy is upper bounded by $\log d$, where $d$ is the dimension of the Hilbert space of the system, and this upper bound is achieved if and only if $\rho_X$ is a maximally mixed state, i.e., $\rho_X = \frac{\mathbf{1}}{d}$. We can also define the R\'enyi version of mutual information between two quantum systems $X$ and $Y$ as:
\begin{equation}
I^\alpha(X:Y) = S^\alpha(X) + S^\alpha(Y) - S^\alpha(XY).
\end{equation}

Frequently, there is a need to compare two different quantum states. Such a comparison can aid in, for instance, evaluating the output of an ideal quantum protocol against a noisy one. One possible method is to employ the operator trace norm, also referred to as the $L_1$ norm. The $L_1$ norm for any operator $M$ is defined as follows:

\begin{equation}
||M||_1 = \text{Tr} \sqrt{M^\dagger M}.
\end{equation}
This norm can be used to calculate the distance between two density matrices $\rho$ and $\sigma$, which is represented by $||\rho - \sigma||_1$. The trace norm distance is motivated physically because for $||\rho - \sigma ||_1 < \epsilon$, $\text{Tr}(\Pi (\rho - \sigma)) < \epsilon$ for any projection operator $\Pi$. This implies that the probability outcome of any experiment between two density matrices $\rho$ and $\sigma$ differs by at most $\epsilon$. Once the distance between two density matrices is established, the quantum mutual information can be employed to place an upper bound on the correlation between two distinct quantum systems. This bound is also referred to as quantum Pinsker's inequality \cite{r6}. The normalized trace distance between $\rho(XY)$ and $\rho(X) \otimes \rho(Y)$ can be defined as $\Delta = \frac{1}{2} ||\rho(XY) - \rho(X) \otimes \rho(Y)||_1$. Then quantum Pinsker's inequality implies \cite{r6}:
\begin{equation}
    \frac{2}{\ln 2} \Delta^2 \leq I(X:Y)
\end{equation}

When the mutual information $I(X:Y)$ is small, it indicates little correlation between the two quantum systems $X$ and $Y$. Thus the joint state $\rho(XY)$ can be approximated by the tensor product of the marginal states $\rho(X)$ and $\rho(Y)$. Another measure of distance between two quantum states is the fidelity, which is defined as $F(\rho(X), \rho(Y)) = (\text{Tr} \sqrt{\rho(X)\rho(Y)})^2$, where $\sqrt{M}$ denotes the unique positive square root of a positive semidefinite matrix $M$. The trace distance can be used to provide upper and lower bounds on the fidelity through the Fuchs-van de Graaf inequalities \cite{r6}, which are given by 
\begin{equation}
    1 - \sqrt{  F(\rho (X),\rho (Y) )} \leq \Delta  \leq \sqrt{1 - F(\rho (X),\rho (Y) ) }
\end{equation} where $\Delta$ is the normalized trace distance. The fidelity is bounded below by $0$ and above by $1$. Two density matrices $\rho(X)$ and $\rho(Y)$ are said to be $\mu$-distinguishable if $F(\rho(X), \rho(Y)) \leq 1-\mu$, and $\nu$-indistinguishable if $F(\rho(X), \rho(Y)) \geq 1-\nu$.


Another extremely important concept in quantum information theory is the decoupling inequality, which has wide-ranging applications such as quantum error correction, transmission of quantum information, protocols such as state merging, coherent state merging, thermodynamics, many-body physics, and even black hole information theory \cite{Hayden_2007, decoupling, Grassl_2022, Sekino_2008}. Suppose a composite system $XY$ is in a joint state $\rho(XY)$. The subsystem $X$ is decoupled from $Y$ if: $\rho(XY) = \rho(X) \otimes \rho(Y)$. This means that the system $X$ is uncorrelated with $Y$ and is therefore statistically independent of any measurement on $Y$. One common way to analyze this setup is to start with the pure state system $RA$, Alice's message $A$ purified with reference $R$. After this, we encode Alice's message $A$ using a unitary encoding and allow it to interact with the environment, which yields the final output $BE$. $B$ here can represent Bob's received message. If the encoding is done such that $R$ and $E$ decouple, then this implies that there exists a decoding protocol, allowing Bob to recover Alice's message $A$ just by having system $ B$. This works even for the case when $R$ and $E$ are approximately decoupled, $\rho(RE) \stackrel{\epsilon}{\approx} \rho(R) \otimes \rho(E)$, that is,
\begin{equation}
F(\rho(RE),\rho (R) \otimes \rho (E)) \geq 1 - \epsilon \text{ for some } \epsilon \geq 0
\end{equation}
Uhlmann's theorem \cite{r6} guarantees that there exists a decoder such that having system $B$, one can approximately reconstruct Alice's message $\rho (A') \stackrel{\epsilon}{\approx} \rho (A)$, that is:
\begin{equation}
F(\rho (A'), \rho (A)) \geq 1 - \epsilon \text{ for some } \epsilon \geq 0
\end{equation}
We will later utilize the decoupling theorem to define the concept of quantum information scrambling.

\subsection{Secret sharing schemes}
\label{sec:secretSharing}
 Secret-sharing schemes serve as cryptographic primitives and are commonly employed in the construction of various cryptographic protocols. Classical secret sharing schemes were proposed by Shamir \cite{shamir} and Blakley \cite{blakley}, while various quantum secret sharing schemes were proposed in \cite{r7, Hillery_1999, r8, r9}. In a secret sharing scheme protocol, a dealer $\mathcal{D}$ distributes a secret $S$ among a set of players $\mathcal{P}$ such that only a certain sufficiently large group can reconstruct it \cite{r9}. Let $\mathcal{P}= \{P_1....P_m\}$ be a set of players. A collection $\Gamma  \subseteq 2^{\{P_1....P_m\}}$ is monotone if $B \in \Gamma$ and $B \subseteq C$ imply that $C \in \Gamma$. An access structure $\Gamma= (\Gamma_{\text{YES}},\Gamma_{\text{NO}})$ is a pair of collections of sets such that $\Gamma_{\text{YES}}, \Gamma_{\text{NO}} \subseteq 2^{\{P_1....P_m\}} $, the collections $\Gamma_{\text{YES}}$ and $ 2^{\{P_1....P_m\}} \setminus \Gamma_{\text{NO}}$ are monotone and $\Gamma_{\text{YES}} \cap \Gamma_{\text{NO}} = \emptyset$. Sets in $\Gamma_{\text{YES}} $ are called authorized sets, and the sets in $\Gamma_{\text{NO}}$ are called unauthorized sets. The access structure is called an incomplete access structure if there is at least one subset of parties $A \subseteq P$ such that $A \not \in \Gamma_{\text{YES}} \cup \Gamma_{\text{NO}}$. Otherwise, it is referred to as a complete access structure.

In a quantum secret sharing scheme, a dealer $D$ wants to share a quantum state $\ket{X}$ with a set of players $\mathcal{P}$ with access structure $\Gamma \in 2^\mathcal{P}$. The quantum secret $\ket{X}$ is choosen from set of possible quantum set $\mathcal{X} = \{ \ket{X_1},\ket{X_2},....,\ket{X_n} \} $, where $\ket{X_i}$ is chosen with probability $p_i$.
Then, the state of the quantum secret can be represented by the density matrix $\rho_S$ as:
\begin{equation}
    \rho_S = p_1 \ket{X_1} \bra{X_1} + p_2 \ket{X_2}\bra{X_2} + ........ + p_n \ket{X_n}\bra{X_n} 
\end{equation}
 Let us introduce a reference system $R$ with Hilbert space $\mathcal{H}_R$, and a purification denoted by $\ket{SR} \in \mathcal{H}_S \otimes \mathcal{H}_R$.  
Let us denote by $A$ a subset of players described by a set $A = \{P_1,...,P_j \} \subseteq \mathcal{P} $. 
 Distribution of shares is given by a completely positive isometric map,
\begin{equation}
    \Lambda_D: S(\mathcal{H}_S) \rightarrow S(\mathcal{H}_1 \otimes ....... \otimes \mathcal{H}_m) 
\end{equation}
where $S(\mathcal{H}_A)$ represents the state space of the system $A$. We denote by $\ket{RP_1....P_m}$ the state of quantum system $RP_1....P_m$ after applying $\Lambda_D$ to S, and identity to $R$. Now, we provide the information-theoretic definition of a quantum secret sharing scheme based on \cite{r9}. 

\begin{definition}[\cite{r9}]
Let $R$ be a reference system such that $SR$ is in a pure state. A quantum secret sharing scheme is a completely positive map which distributes a quantum secret $S$  among a set of players $\mathcal{P} = \{P_1....P_m \}$  such that it realizes an access structure $\Gamma = (\Gamma_{\text{YES}},\Gamma_{\text{NO}})$ with the following two requirements:
\begin{itemize}
\item recoverability requirement: For all $A \in \Gamma_{\text{YES}}$, we have that $I(R:A) = I(R:S)$.
\item secrecy requirement: For all $A \in \Gamma_{\text{NO}} $, we have $I(R:A) = 0$
\end{itemize}
\end{definition}
Such a scheme is also called a threshold or "all or nothing" scheme because every set of players in $\Gamma_{\text{YES}}$ can reconstruct the secret perfectly, while a set in $A \in \Gamma_{\text{NO}} $ has no information about it. Furthermore, it is known that these conditions are necessary and sufficient for quantum error correction \cite{PhysRevA.54.2629}. A quantum threshold scheme is also denoted by $((k,n))$ where an arbitrary quantum secret is divided into $n$ parties such that any $k$ or more parties can perfectly reconstruct the secret while any $k-1$ or fewer parties have no information at all about the secret. This also means that the reduced density matrix of these $k-1$ parties is independent of the secret.

\begin{definition}[\cite{r8}]
Let $1 \leq k \leq n$. A $k$ out of $n$, $((k,n))$, quantum threshold access structure $\Gamma$ over a set of players $\mathcal{P} = \{P_1....P_n \}$ is the complete access structure accepting all subsets of size at least $k$, that is $\Gamma_{\text{YES}} = \{ A \subseteq \mathcal{P}: |A| \geq k \}$ and $\Gamma_{\text{NO}} = \{ A \subseteq \mathcal{P}: |A| \leq k -1  \}$
\end{definition}
 An example of a $((2,3))$ quantum threshold scheme is given in Appendix \ref{app:example}. In quantum secret sharing schemes, it is possible to encode the quantum secret into the players, which can be in a pure or mixed state. A pure state quantum secret sharing is a scheme to encode quantum secrets as a pure state, while a mixed state quantum secret sharing is a scheme to encode the quantum secret as mixed states \cite{r7, r8}. Furthermore, the scheme itself can be perfect or approximate. 

\begin{definition}
Let $R$ be a reference system such that $SR$ is in a pure state. A approximate quantum secret sharing is a scheme where a quantum secret $S$ among a set of players $\mathcal{P} = \{P_1....P_m \}$ such that it realizes an  access structure $\Gamma = (\Gamma_{\text{YES}},\Gamma_{\text{NO}})$ with relaxed conditions:
\begin{itemize}
    \item Relaxed recoverability requirement: For all $ A \in \Gamma_{\text{YES}}  $, we have $I(R:A) \geq I(R:S) - \delta$ for some small $\delta$. 
    \item Relaxed secrecy requirement: for all $A  \in \Gamma_{\text{NO}} $, we have $I(R:A) \leq \gamma$ for some small $\gamma$.   
\end{itemize}
\end{definition}

A relaxed recoverability requirement guarantees that the authorized parties can reconstruct the secret with high fidelity. The relaxed secrecy requirement ensures that the correlation between the quantum secret $S$ and unauthorized parties $B$ is small, preventing unauthorized parties from having much information about the secret. 

\begin{figure}
    \centering
    \includegraphics[width=0.5\textwidth]{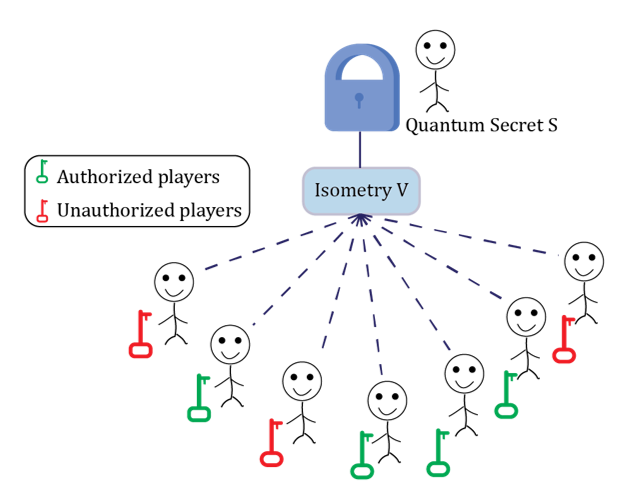}
    \caption{$((k,n))$ Quantum Secret Sharing(QSS) threshold schemes; $k$ players out of $n$ players have to come together to reconstruct the original secret. Any arbitrary collection of players larger than size $k$ is called authorized players, while one with fewer than $k$ is called unauthorized players.}
    \label{fig:my_label}
\end{figure}

Unlike in threshold secret sharing schemes, a ramp secret sharing protocol exists where partial information about the secret is allowed to leak to a set of participants that cannot fully reconstruct the secret. In particular, there are parameters $1 \leq b < g \leq n$ such that the set of parties of size at least $g$ should be able to reconstruct the secret,  while sets of parties of size at most $b$ cannot learn anything about the secret. Furthermore, we make no requirement on the parties of size $g-b$. $g$ and $b$ stand for good and bad, respectively. Such a scheme is denoted by $(b,g)$ ramp secret sharing scheme. A quantum version would be denoted by $((b,g))$. The classical ramp secret sharing scheme was first introduced by Blakley and Meadows \cite{ramp_scheme}, and later used to construct the efficient secure multiparty computation protocols \cite{ramp_scheme_2}. Compared to quantum threshold secret sharing, quantum ramp secret sharing has not been extensively explored \cite{Zhang_2014, MATSUMOTO_2018}.

\section{Our contributions}
In this section, we provide an overview of our main contributions, outlined as follows
\begin{itemize}
    \item In the literature, one can find various notions of quantum information scrambling. We have provided a novel information-theoretic definition of scrambling that encompasses these other notions. This information-theoretic definition, which we refer to as $l$ scrambling, goes as follows. Suppose $RA$ is in a pure state. A unitary operator $U_{AB}: A \otimes B \xrightarrow{} C \otimes D$ is called $l$-scrambling if any arbitrary subsystem $C$  of size less than $l$ approximately decouples from $R$.
    
    \item Information theoretic definition of scrambling allows us to show that Haar scrambling, a random unitary chosen from Haar measure, is equivalent to  $ \left( \left( \frac{N+s(\mathcal{P})}{2}- \epsilon, \frac{N+s(\mathcal{P})}{2}+\epsilon \right) \right)$ quantum ramp secret sharing scheme. Here $N$ is the total size of players $\mathcal{P}$, $s(\mathcal{P})$ is the Von Neumann entropy of the players, and $\epsilon$ is a measure of accuracy. 
    
    \item  The simple counting argument shows that it requires an exponential number of gates, $\mathcal{O}(4^n)$, to implement the $n$ qubit Haar unitary. This would make this protocol exponentially hard to implement. However, using $t$ design arguments, we show that the protocol, both encoding and decoding, can be implemented efficiently, i.e., using only a polynomial number, $\mathcal{O}(n^2)$, of one and two-qubit gates. 
\end{itemize}

\section{Quantum Information Scrambling}
\label{sec:quantumInformationScrambling}
The phenomenon of "scrambling" of quantum information is a fundamental concept closely related to many-body physics, quantum chaos, complexity theory, and black holes \cite{r10,r11}. This phenomenon can be viewed as a form of "thermalization of the quantum information" and is crucial for understanding the spread of information in quantum systems.
As illustrated in Figure \ref{fig:informationScrambling2}, at time $t=0$, the reference $R$ is only entangled with the Alice qubit $q_A$, such that $S(R) = S(q_A) = 1$, $S(Rq_A) = 0$, and $I(R:q_A) = 2$. As time progresses, the quantum information from Alice's qubit becomes increasingly scrambled, depending on the system's dynamics. This is because the initially localized entanglement of Alice's qubit with the reference $R$ begins to expand across the system. In one scenario, the dynamics can be such that the information is carried coherently, for example, through a circuit using only SWAP gates. The information is highly localized in this case, and the unitary cannot be considered scrambling. Conversely, in strongly interacting systems, the entanglement spreads in a complex fashion such that the reference $R$ becomes entangled with a large collection of qubits. In this case, the information is considered to be scrambled.
Because quantum mechanics is unitary, information cannot be lost globally. Therefore, to define information scrambling, we must focus on subsystems.

To study the spreading of information in quantum dynamics, we consider the following simple model where we have a Hilbert space of $n = |A| + |B| = |C| + |D|$ qudits partitioned as:
\begin{equation}
    \mathcal{H} = A \otimes B = C \otimes D
\end{equation}
and a unitary map
\begin{equation}
    U_{AB}: A \otimes B \xrightarrow{} C \otimes D
\end{equation}
Suppose Alice chooses to encode her quantum information in system $A$. Additionally, we introduce an external reference system $R$, which is perfectly entangled with $A$. The Alice system then interacts with another system $B$, which is purified by the external system $B'$.   Thus, the total initial state is $\ket{RA} \ket{BB'}$, and the final state is:
\begin{equation}
\ket{\psi}_{RB'CD} = U_{AB} \otimes I_{RB'}\ket{RA} \ket{BB'}
\end{equation}
The overall setup is depicted in Figure \ref{fig:genralCircuit}. In the literature, various notions of quantum information scrambling are presented, which we have classified into three levels of hierarchy. The first level is Haar scrambling, where the unitary $U$ is a random quantum circuit selected from the Haar measure. The objective is to determine whether a part of the system is maximally mixed and, if not, how close it is to being in a maximally mixed state. A maximally mixed state represents a state in which information is thermalized. At the second level, we encounter chaotic scrambling, where the Hamiltonian governing the quantum dynamics exhibits quantum chaotic behavior. Chaotic scrambling includes Haar scrambling, as proven in \cite{yoshida2017efficient}, but can also involve other families of unitaries. At the third level, we provide an information-theoretic definition of scrambling, encompassing both Haar and chaotic scrambling. 
 
 \begin{figure}
    \centering
    \includegraphics[scale=0.8]{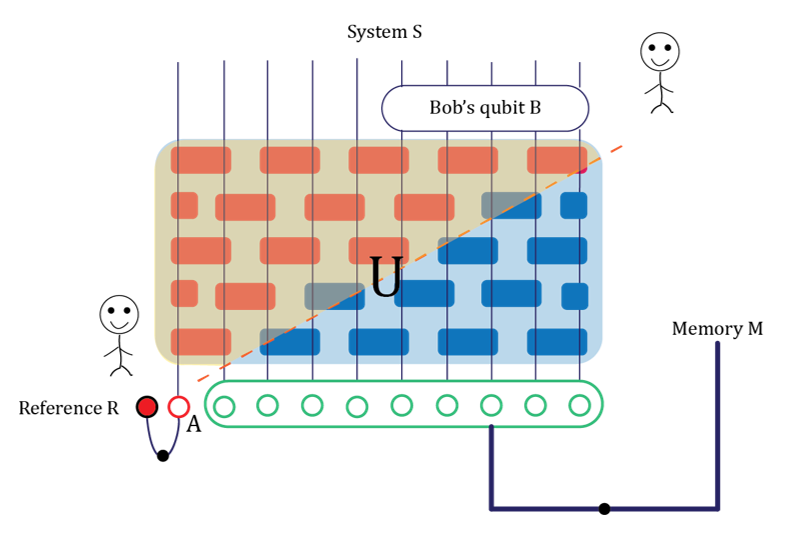}
    \caption{Alice and Bob try to communicate via a scrambling unitary $U$ acting on N qubits. Alice has full control of the first qubit $A$, which is entangled with the reference system $R$. The rest of the qubits, marked green, can be in a mixed state  $\rho$ purified by an external system called Memory. This pure state is denoted by $\ket{\sqrt{\rho}}$.  The scrambling unitary acts only on Alice's qubit and the system's qubit, which is marked green. After the unitary evolution, Bob then has access to the system's set of qubits. If the scrambling unitary has a local structure, it is also possible to have an information light cone indicated by a yellow cone. The initial state of the system, including the reference, the system, and the memory, is:
     $\ket{\boldsymbol{\psi}} = \frac{1}{\sqrt{2}}(\ket{0}_R\ket{0}_{q_A} +\ket{1}_R\ket{1}_{q_A}) \ket{\sqrt{\rho}}$
Since the time evolution unitary doesn't act on reference and memory, the total dynamics is then given by:
$\ket{\boldsymbol{\psi}(t)}  = \frac{1}{\sqrt{2}} \left(\ket{0}_R U_S \otimes I_\text{MEM} \ket{0,\sqrt{\rho}}  +  \ket{1}_R U_S \otimes I_\text{MEM} \ket{1,\sqrt{\rho}} \right) $. More detail about this figure is in Appendix \ref{app:toymodel}.  }
    \label{fig:informationScrambling2}
\end{figure}

\subsection{Haar scrambling}
In Haar scrambling, we have a unitary $U$ that is chosen at random from the Haar measure. This serves as a baseline for scrambling, as we expect the late-time values of entropy and mutual information for any scrambling unitary $U(t)$ to match those of Haar random unitaries. However, a pure random unitary circuit of size $2^n \times 2^n$ does not have any inherent local structure. To account for locality, one can restrict the situation to be local in one or two-qubit gates.

\begin{definition}[\cite{r11}]
A Haar scrambled quantum state is the state obtained by applying a random unitary $U$ chosen from group-invariant Haar measure, $\ket{\psi (U)} = U \ket{\psi_0}$.
\end{definition}

The Haar measure possesses several interesting mathematical properties, enabling the derivation of analytical results for various information-theoretic quantities \cite{mele2023introduction}. We briefly review some of these properties in Appendix \ref{app:HaarIntegrals}. One important physical characteristic of Haar scrambling is that a randomly chosen pure state in $\mathcal{H}_{AB}$ is likely close to a maximally entangled state if $\frac{|A|}{|B|} << 1$. This property is known as Page scrambling, named after Page, who first introduced it to explain black hole physics \cite{Page_1993}. Mathematically, for any bipartite Hilbert space $\mathcal{H}_A \otimes \mathcal{H}_B$, we have: 

\begin{equation}
    \int dU || \rho_A(U) - \frac{I_A}{|A|} ||_1 \leq \sqrt{\frac{|A|^2 - 1 }{|A||B| +1 }} < \frac{\rho + 2}{1}
\end{equation}
and when $|B|$ is significantly larger than $|A|$, the typical deviation of $\rho_A$ from maximally mixed state is extremely small. For example, if $B$ has 10 more qubits than $A$, then the typical deviation from maximal entanglement is bounded by $2^{-5}$. Using the analytical results from Appendix \ref{app:HaarIntegrals}, it is possible to analytically compute the R\'enyi entropy between the reference $R$ and the subsystem $C$ using Eq. \ref{eq:haarintegral}. Unfortunately, the quantity $I_2(X:Y) = S_2(X) + S_2(Y) - S_2(XY)$ is not a good candidate for mutual information as it violates the data processing inequality \cite{Linden_2013}. Moreover, it is generally not even possible to use this quantity to obtain bounds for the quantum mutual information, which is the actual quantity of interest. However, for the Haar scrambling case, $R$ is in a maximally mixed state, and $C$ is expected to be close to a maximally mixed state \cite{Hayden_2006}. In this case, we can use R\'enyi entropy to upper-bound the mutual information as follows:
\begin{equation}
\label{eq:R'enyiBound}
I(R:C) \leq \log|R| + \log|C| - S_2(RC),
\end{equation}
where $S_2$ denotes the second R\'enyi entropy. For example, let us consider a case where $A$ is a single qubit. Let $p$ be the size of region $C_p$.  The R\'enyi entropy of $C_p$ and $RC_p$ can be analytically computed using Eq. \ref{eq:haarintegral} and can be found in \cite{r12, r10, r13}. Then, using Eq. \ref{eq:R'enyiBound}, we can upper-bound the quantum mutual information between $R$ and $C_p$ as
\begin{equation}
\label{eq:R'enyiMutualInformation}
I(R:C_p) \leq 1 + \log_2 \left( 2 - \frac{3(1 - 2 ^{2p -2 N})}{2 + (2^{-s} - 2^{-N + 1})4^{p- N/2}} \right),
\end{equation}
where $s$ is the second R\'enyi entropy of system $B$.

\subsection{Chaotic Scrambling (OTOC measure)}
In this section, we review the concept of chaotic scrambling, which involves a Hamiltonian exhibiting quantum chaotic features. Quantum chaotic Hamiltonians are typically strongly interacting systems that spread quantum information throughout the whole system.
The measurement of Out-of-Time-Ordered Correlators (OTOCs) \cite{larkin, Maldacena_2016, Hashimoto_2017} is often employed to gauge quantum chaos. For a brief overview of OTOCs, refer to Appendix \ref{app:OTOCs}. To formulate the chaotic scrambling
 in the language of quantum channels, we consider a Hilbert space of $n = |A| + |B| = |C| + |D|$ qubits that are partitioned as:
$\mathcal{H} = A \otimes B = C \otimes D$, and a unitary map $U_{AB}: A \otimes B \xrightarrow{} C \otimes D$. Let the dimension for the system be denoted by $d$. Suppose $a$, $b$, $c$, and $d$ corresponds to the size of $A$, $B$, $C$ and $D$ respectively and the corresponding dimensions be denoted by $d_A$, $d_B$, $d_C$ and $d_D$ respectively.  For systems consisting of qubits only, it would be $d= d_Ad_B = 2^a 2^b= d_Cd_D = 2^c2^d = 2^{a+b} = 2^{c+d} = 2^n$.
Let $O_i$ be hermitian operators supported on region $i$, where $i \in { A,B,C,D}$ and define $O_i(t) = U^\dagger O(0) U$. With this setup, we refer to the chaotic scrambling as unitary exhibiting chaotic features given by the OTOCs measure \cite{yoshida2017efficient}.

\begin{definition}[\cite{yoshida2017efficient}]
A unitary operator $U_{AB}$ is called chaotic scrambling iff it satisfies:
\begin{equation}
\label{eq:chaoticscrambling}
  \langle O_A(0) O_B(t) O_C(0) O_D(t) \rangle\simeq \langle O_AO_C \rangle \langle O_B \rangle \langle O_D \rangle + \langle O_BO_D \rangle \langle O_A \rangle \langle O_C \rangle - \langle O_A \rangle \langle O_B \rangle \langle O_C \rangle \langle O_D \rangle  
\end{equation}
 where 	$\simeq$ means up to an order $d^{-2}$. 
\end{definition}
In terms of the size of subsystems, this definition implies that a unitary $U_{AB}$ is  chaotic scrambling if it satisfies:
\begin{equation}
\langle O_A(0) O_B(t) O_C(0) O_D(t) \rangle \simeq \frac{1}{d_A^2} + \frac{1}{d_D^2} - \frac{1}{d_A^2 d_D^2}
\end{equation}
However, it's important to note that defining quantum chaos via OTOCs doesn't always correspond to true quantum chaos \cite{Xu_2020}. When averaging $\langle O_A(0) O_B(t) O_C(0) O_D(t) \rangle $ over Haar random unitary $U$, it also satisfies the condition for chaotic scrambling. Thus, Haar scrambling belongs to the family of chaotic scrambling.

\subsection{$l$ scrambling}
\label{sec:informationScrambling}
\begin{figure}
    \centering
    \includegraphics[width=0.6 \textwidth]{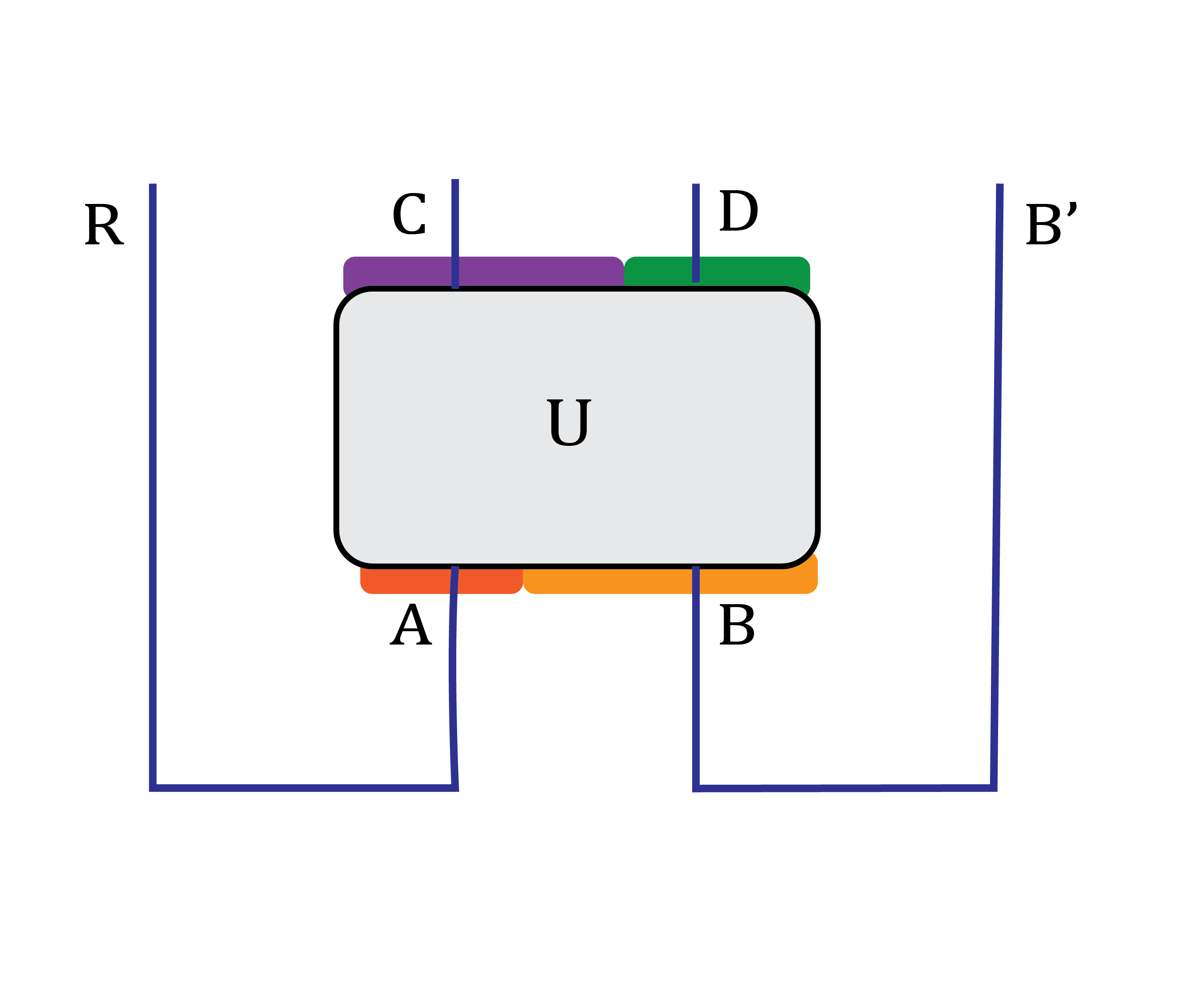}
    \caption{ Suppose Alice decides to encode her quantum information in $A$, which is perfectly entangled with external reference state $R$. Alice's qubit then interacts with system $B$, which is also purified with external system $B'$. Thus the total initial state is $\ket{RA} \ket{BB'}$ while final state after scrambling unitary $U_{AB}$ is
$    \ket{\psi}_{RB'CD} = U_{AB} \otimes I_{RB'}\ket{RA} \ket{BB'}$.}
    \label{fig:genralCircuit}
\end{figure}

During scrambling, a generation of global entanglement causes the initially localized quantum information to spread across the system, thereby hiding it from the observer, who can only access a portion of it. Consequently, information-theoretic measures like entanglement entropy and mutual information emerge as natural and powerful tools for formally defining quantum information scrambling. To give an information theoretic setting, we will again consider the Hilbert space of $n = |A| + |B| = |C| + |D|$ qubits partitioned as:
\begin{equation}
    \mathcal{H} = A \otimes B = C \otimes D
\end{equation}
and a unitary map
\begin{equation}
    U_{AB}: A \otimes B \xrightarrow{} C \otimes D
\end{equation}
Suppose Alice chooses to encode her quantum information in system $A$. Additionally, we introduce an external reference system $R$, which is perfectly entangled with $A$. The Alice system then interacts with another system $B$, which is purified by the external system $B'$.   Thus, the total initial state is $\ket{RA} \ket{BB'}$, and the final state is:
\begin{equation}
\ket{\psi}_{RB'CD} = U_{AB} \otimes I_{RB'}\ket{RA} \ket{BB'}
\end{equation}
The overall setup is depicted in Figure \ref{fig:genralCircuit}.

Thus, it becomes evident that a definition of scrambling must entail that accessing a small part of the subsystem $C$ should not grant one the ability to obtain much information about system $A$. In the literature, it is frequently assumed that the state $BB'$ is maximally entangled \cite{r11}. In such a scenario, system $C$ becomes decoupled from $R$ provided that the size  $D$ is larger than that of $A$. However, it becomes more complicated when $BB'$ does not form a maximally entangled state. To address this issue, we propose a new definition of scrambling unitary where $R$ and $C$ are nearly independent for most $C$ with a size smaller than some parameter $l$, which interestingly depends upon the purity of $B$. 

\begin{definition}[$l$- scrambling]
\label{def_l-scrambling}
Suppose $RA$ is in a pure state. A unitary operator $U_{AB}: A \otimes B \xrightarrow{} C \otimes D$ is called $l$-scrambling if any arbitrary subsystem $C$  of size less than $l$ approximately decouples from $R$: $\rho_{RC} \stackrel{\epsilon}{\approx} \rho_R \otimes \rho_C$ that is,
\begin{equation}
   F(\rho_{RC},\rho_R \otimes \rho_C ) \geq 1 - \epsilon \text{ for some } \epsilon \geq 0 \text{ for any } C \text{ with } |C| < l 
\end{equation}
\end{definition}
Hence, the outcome of any measurement on $C$ is statistically independent of any measurement on $R$. Moreover, the decoupling theorem \cite{decoupling, Hayden_2006} implies that having access to the system $DB'$ allows for the existence of a decoder that can reconstruct a state $\rho_A$ with high fidelity. Thus, $l$-scrambling is related to both information hiding and retrieval. 

The parameter $l$ depends on the purity of the state $B$ but is independent of how information was scrambled, i.e., on the structure of the scrambling unitary $U_{AB}$.
However, there are other quantities, such as scrambling time (the time taken to scramble quantum information) and the complexity of the scrambling unitary, which depends on the structure of the unitary $U_{AB}$. These quantities are crucial while studying the dynamics of physical systems and quantum circuits. It is worth noting that this definition of $l$-scrambling doesn't allow us to make a statement for system $C$ with sizes greater than $l$. For example, subsystem $D$ could have partial or no correlation with $R$.


Another measure of information scrambling often considered in the literature is the tripartite information \cite{r11, Kitaev_2006}, denoted as $I_3(R:C:D) = I(R:C) + I(R:D) - I(R:CD)$. This quantifies the information about system $A$ that is non-locally hidden over arbitrary subsets $C$ and $D$. 
It is commonly asserted that a quantum system that scrambles information must have a non-positive tripartite information $I_3(A:C:D)$, and that the maximal negativity of $I_3(R:C:D)$ is indicative of the circuit that scrambles information most effectively. However, this notion only applies when $B$ is maximally mixed. In fact, for any unitary $U_{AB}: A \otimes B \xrightarrow{} C \otimes D$, we have $I(R:C) + I(R:D) \leq I(R:CD)$ and $I_3(R:C:D)$ is always non-positive. Hence, it cannot serve as a reliable indicator of scrambling. To overcome this issue, we propose a new variant of tripartite information based on the definition of $l$-scrambling, $l$-tripartite information, denoted by $I_3(R:C_l:D_l)$. Here, $C_l$ and $D_l$, which may or may not overlap, have sizes less than $l$, but $C_lD_l$ together have a size greater than or equal to $l$.

\begin{definition}[$l$- tripartite information]
Suppose we have a quantum state $RCD$ which can be pure or mixed. We define l-tri partite information, $I_3(R:C_l:D_l)$ as:
\begin{equation}
    I_3(R:C_l:D_l) = I(R:C_l) + I(R:D_l) - I(R:C_lD_l)
\end{equation}
where $C_l$ and $D_l$ are subsystems of $CD$ with $|C_l| < l$,  $|D_l| < l$ and $|C_lD_l| \geq l$. 
\end{definition}

\begin{remark}
Suppose $RA$ is in a pure state. A unitary operator $U_{AB}: A \otimes B \xrightarrow{} C \otimes D$ is $l$-scrambling iff $I_3(R:C_l:D_l)$  is maximally negative i.e. $I_3(R:C_l:D_l) = - I(R:C_lD_l) + \mathcal{O}(\epsilon)$ for some $\epsilon \geq 0$.
\begin{proof}
If $U_{AB}$ is $l$- scrambling, then system $C_l$ and $D_l$ decouples from $R$, thus  $I(R:C_l) = I(R:D_l) = \mathcal{O}(\epsilon)$. This implies $I_3(R:C_l:D_l) = - I(R:C_lD_l) + \mathcal{O}(\epsilon)$. Furthermore, if  $I_3(R:C_l:D_l) = - I(R:C_lD_l) + \mathcal{O}(\epsilon)$, then $I(R:C_l)$ and $I(R:D_l)$ are of order $\mathcal{O}(\epsilon)$. Thus, $C_l$ and $D_l$ decouples from $R$ which implies that $U_{AB}$ is $l$- scrambling. 
\end{proof}
\end{remark}

Although there are three distinct notions of quantum information scrambling, a hierarchy exists in that $l$-scrambling encompasses both chaotic and Haar scrambling. Chaotic scrambling includes Haar scrambling unitaries \cite{r11, yoshida2017efficient, Harrow_2021, Hayden_2006}. However, the reverse is not always true. $l$-scrambling can be constructed using only Clifford gates, which lack quantum chaotic features \cite{leone2023learning, oliviero2022black}. And chaotic unitaries may include other dynamics, such as the mixed-field Ising model, which are not Haar unitary. Since calculating OTOCs typically involves averaging over maximally mixed states, $l$-scrambling offers a more nuanced understanding of information scrambling by considering purity. We plan to establish a more detailed relationship between $l$-scrambling and OTOCs in the future.


\section{Ramp secret sharing from Haar scrambling}
\label{sec:RampSecret}

 In the definition of $l$-scrambling, we saw that the subsystem with a size less than $l$ has little correlation with the reference system. In secret sharing terminology, this subsystem would be referred to as the unauthorized set. In this section, we will formally show that Haar scrambling exhibits characteristics of ramp secret sharing. This scheme, rather than a threshold one, is expected, as the threshold scheme is hard to obtain from a randomly selected unitary operation. To do the proper analysis, we will first provide an information-theoretic definition of the quantum ramp secret sharing scheme. 

\begin{definition}
Let $1 \leq b < g \leq n$, and $R$ be a reference system such that $SR$ is in a pure state. The $((b,g))$ quantum ramp secret sharing is a a completely positive map which distributes a quantum secret $S$ among a set of players $\mathcal{P} = \{P_1....P_m \}$ such that it realizes an incomplete access structure $\Gamma_{b,g} = (\Gamma_{\text{YES}},\Gamma_{\text{NO}})$ with conditions:
\begin{itemize}
    \item Recoverability requirement: If $I(R:A) = I(R:S)$ for all $|A| \geq g$ then $A \in \Gamma_{\text{YES}}$. This implies that $A$ can reconstruct the secret.
    \item Secrecy requirement: If $I(R:A) = 0$ for all $|A| \leq b$, then $A \in \Gamma_{\text{NO}}$. This implies that A cannot learn anything about secret S. 
    \item No requirements on sets A where $b < |A| < g$ which we refer as grey area. 
\end{itemize}
\end{definition}

The scheme is then referred to as the $((b,g))$ ramp scheme and can be made approximate with relaxed recoverability and relaxed secrecy requirements as follows. 

\begin{definition}
Let $1 \leq b < g \leq n$, and $R$ be a reference system such that $SR$ is in a pure state. The $((b,g))$ approximate quantum ramp secret sharing is a a completely positive map which distributes a quantum secret $S$ among a set of players $\mathcal{P} = \{P_1....P_m \}$ such that it realizes an incomplete access structure $\Gamma_{b,g} = (\Gamma_{\text{YES}},\Gamma_{\text{NO}})$ with conditions:
\begin{itemize}
    \item Relaxed recoverability requirement: If $I(R:A) \geq I(R:S) - \delta$ for all $|A| \geq g$ then $A \in \Gamma_{\text{YES}}$.
    \item Relaxed secrecy requirement: If $I(R:A) \leq \gamma$ for all $|A| \leq b$, then $A \in \Gamma_{\text{NO}}$.
    \item  No requirements on sets A where $b < |A| < g$
\end{itemize}
  
\end{definition}
 The relaxing parameters $\delta$ and $\gamma$ give a measure of how well authorized parties can reconstruct the secret while unauthorized parties are prevented from recovering it. It the the scheme of this nature that Haar scrambling shows rather than a perfect one. Figure \ref{fig:approximateRamp} shows the relation between mutual information and the size of parties for perfect and approximate threshold and ramp schemes when a single qubit secret is shared among $N$ parties.
 
 To analyze the quantum ramp schemes $((g,b))$, we introduce two metrics: the gap $G$ and the Rampiness $R$. The gap is defined as the difference between $g$ and $b$, expressed as $G = g - b$. The Rampiness, denoted by $R$, is the ratio of the gap to the total number of players, represented as $R = \frac{G}{N}$.  The Rampiness value ranges from zero to one, where $R = 0$ corresponds to the threshold scheme with a gap $G = 0$, and $R = 1$ signifies the absence of any secret sharing scheme with a gap $G = N$. A lower value of $R$ suggests a superior scheme since it indicates a smaller grey region. Naturally, we can extend this definition of rampiness and gap to the approximate scenario by selecting $g$ and $b$ based on relaxed parameters $\delta$ and $\epsilon$.
 
\begin{figure}
    \centering
    \includegraphics[width=0.7\textwidth]{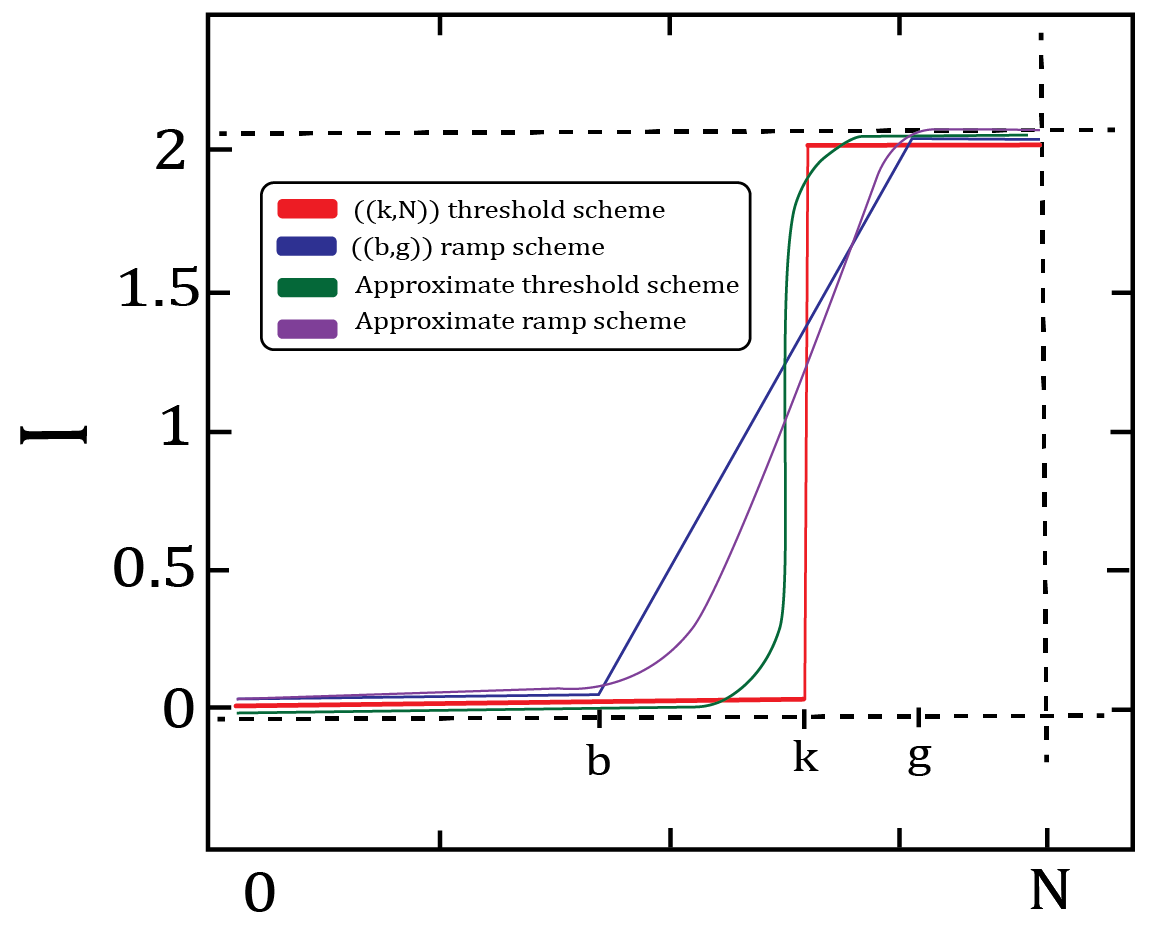}
   \caption{A qubit secret shared among $N$ parties in different scenarios. For the $((k,N))$ scheme, the mutual information between the reference $R$ and the parties goes through a sharp step function-like transition from $0$ to $2$ as the number of parties crosses the threshold $k$. In $((b,g))$ ramp scheme, the mutual information is zero for parties less than the size $b$ and two when it is greater than $g$ while some mutual information in the grey area i.e. when the player's size is greater than $b$ but less than $g$. Fig b:   Includes also approximate $((k,N))$ scheme as for all $|A| \geq k$, $I(R:A) \geq 2 - \delta$ for some small $\delta$ and for all $|A| \geq k$, $I(R:A) \leq \epsilon$ for some small $\epsilon$ and the approximate $((b,g))$ scheme as $I(R:A) \geq 2 - \delta$ for all $|A| \geq g$ and $I(R:A) \leq \epsilon$ for all $|A| \leq b$. }
    \label{fig:approximateRamp}
\end{figure}

\begin{remark}
The lower bound on the size of gap, $G = g -b$ is equal to the size of the quantum secret.
\begin{proof}
We call $S$ an important share if there is an unauthorized set T such that $S \cup T $ becomes authorized.
The dimension of each important share of a quantum secret sharing scheme must be at least as large as the dimension of the secret \cite{r8}. $G$ is an important share, as adding size $G$ parties to the size $b$  parties gives us an authorized set. Thus, $G$ must at least have the size of a quantum secret. 
\end{proof}
\end{remark}

Having established the formal definition of quantum ramp secret sharing, let us delve into the relationship between secret sharing and Haar scrambling. The parameter $l$ of $l$-scrambling has to be less than or equal to the parameter $b$ of the quantum ramp secret sharing scheme: $l \leq b$. We discussed before that the parameter $l$ of $l$-scrambling depends upon the purity of the initial state. Therefore, we will see that, by changing the purity of players, all possible approximate quantum ramp secret-sharing schemes can be obtained.
First, we consider the scenario of a pure state and subsequently demonstrate how one can get the mixed state case by discarding players from the pure state case.

For this, let us denote the total system size, i.e., number of players $\mathcal{P} = \{P_1....P_m \}$ by $N$ and the entropy of the $\mathcal{P}$ before the quantum secret is encoded by $s(\mathcal{P})$. Let $1 \leq b < g \leq n$, $S$ be the quantum secret, and  $R$ be a reference system such that $SR$ is in a pure state, $((b,g))$ be a quantum ramp schem,e and $P(x)$ be the arbitrary set of players with size $x$.

\subsection{Ramp secret sharing from Haar scrambling: Pure state case}

A pure-state quantum secret sharing scheme encodes quantum secrets into pure states of the players.  
For a pure initial state of the players, the entropy satisfies $s(\mathcal{P}) = 0$, and for any two disjoint subsets $P(b)$ and $P(N-b)$ we have
\begin{equation}
    I(R : P(b)) + I(R : P(N-b)) = I(R : S),
\end{equation}
since the global state on $RP(N)$ is pure.  
If $I(R : P(b)) = 0$, then $P(b)$ is an unauthorized set.  
Because the complementary region $P(N-b)$ cannot both be unauthorized (by the above identity) and cannot overlap with another authorized region (by the no-cloning theorem), one obtains the constraints
\begin{align}
    \frac{N+1}{2} + 1 \;\le\; g \;\le\; N - b, \\
    N - g \;\le\; b \;\le\; \frac{N-1}{2}.
\end{align}

We now analyze the case where the scrambling unitary $U$ is drawn from the Haar measure.  
Let $P(\ell)$ be any subsystem of size $\ell$ after scrambling, and let $I(P(\ell))$ denote the mutual information between the reference $R$ (one qubit) and this region.  
Using the average second R\'enyi entropy of Haar-random states (Appendix~B) together with the bound
\begin{equation}
    I(R:C) \le \log|R| + \log|C| - S_2(RC),
\end{equation}
we obtain the general inequality
\begin{equation}
    I(P(\ell)) \;\le\; 2\ell - N + \log_2\!\left( 1 + 2^{\,N - 2\ell} \right).
    \label{eq:HaarPureBound}
\end{equation}

This expression shows that Haar scrambling cannot realize a threshold secret-sharing scheme:
for $\ell = N/2$, one finds $I(P(\ell)) \le 1$, strictly below the value $2$ required for perfect recovery of a qubit secret.  
Thus $\ell = N/2$ lies inside the ``grey region'' of a ramp scheme.  
We now show that Haar scrambling realizes an approximate ramp secret-sharing scheme.

\begin{theorem}
\label{theorem:purestate}
A unitary $U$ drawn from the Haar measure implements an approximate 
$\big( (\frac{N}{2}-\epsilon,\; \frac{N}{2}+\epsilon) \big)$
quantum ramp secret sharing scheme when the quantum secret is encoded into pure-state players.
\end{theorem}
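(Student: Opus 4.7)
The plan is to derive both the secrecy bound and the recoverability bound from a single ingredient, the Haar-average inequality~\eqref{eq:HaarPureBound}, and then use the purity of the global state to relate the two complementary regions. First I would apply \eqref{eq:HaarPureBound} at $\ell = b = N/2 - \epsilon$ to get
\begin{equation}
I(R:P(b)) \;\le\; -2\epsilon + \log_2\!\bigl(1 + 2^{2\epsilon}\bigr)
\;=\; \log_2\!\bigl(1 + 2^{-2\epsilon}\bigr) \;=:\; \gamma(\epsilon),
\end{equation}
where $\gamma(\epsilon) \to 0$ exponentially as $\epsilon$ grows. This immediately certifies the relaxed secrecy requirement with parameter $\gamma(\epsilon)$ for every subset $A \subseteq \mathcal{P}$ of size at most $b$.

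The relaxed recoverability condition is then obtained by complementarity. Because $s(\mathcal{P}) = 0$, the full state $RP(N)$ is pure, and for any bipartition of the players into $A$ and $A^c$ we have $S(RA) = S(A^c)$ and $S(A) = S(RA^c)$, which yields the identity
\begin{equation}
I(R:A) + I(R:A^c) \;=\; 2\,S(R) \;=\; I(R:S).
\end{equation}
For a qubit secret $I(R:S) = 2$. Given any $A$ with $|A| \ge g = N/2 + \epsilon$, its complement satisfies $|A^c| \le N/2 - \epsilon = b$, so the secrecy bound just established gives $I(R:A^c) \le \gamma(\epsilon)$ and therefore $I(R:A) \ge 2 - \gamma(\epsilon)$. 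Taking $\delta = \gamma(\epsilon)$ supplies the recoverability clause of the approximate $((b,g))$ definition, closing the argument for the average Haar unitary.

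The two steps above package quickly, so I expect the main subtlety to lie in promoting the statement from an expectation over $U$ to a property of a typical (or even a generic) Haar unitary, which is what the theorem's phrasing really demands. The natural route is concentration of measure: Levy's lemma applied to the Lipschitz functional $U \mapsto S_2(RC)$ shows that the Haar bound~\eqref{eq:HaarPureBound} holds with exponentially small fluctuations in Hilbert-space dimension. I would then union-bound over the $\binom{N}{\ell}$ choices of subregion and over the $O(N)$ relevant sizes $\ell$; since the number of subsets is only $2^{O(N)}$ while Levy's tail decays like $e^{-2^{\Omega(N)}}$, the union bound is harmless. The obstacle to watch out for here is that \eqref{eq:HaarPureBound} was derived through a Jensen-type step on $S_2$, so one must check that the concentration is controlling the right quantity; if needed, a direct second-moment computation of $\mathrm{Tr}(\rho_{RC}^2)$ via Haar integrals (Appendix~B) combined with Markov's inequality gives a cleaner high-probability statement without invoking Lipschitz machinery.

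Finally, I would record the two explicit parameter choices $\delta = \gamma = \log_2(1+2^{-2\epsilon})$ so the reader sees the quantitative form of ``approximate'' and observes that the grey region has width $2\epsilon$, matching the rampiness $R = 2\epsilon/N$ introduced earlier. A short remark would note that shrinking $\epsilon$ tightens the scheme's gap at the cost of weaker $\delta,\gamma$, and that $\epsilon = 0$ would force $\gamma = 1$, recovering the earlier observation that Haar scrambling cannot produce an exact threshold scheme.
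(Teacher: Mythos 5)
Your proposal is correct and follows essentially the same route as the paper: apply the Haar-average bound \eqref{eq:HaarPureBound} at $\ell = N/2-\epsilon$ to get relaxed secrecy, then use purity of the global state on $RP(N)$ to convert secrecy of $P(b)$ into recoverability for its complement. Two points of divergence are worth recording, both in your favor. First, your algebraic simplification $-2\epsilon+\log_2\bigl(1+2^{2\epsilon}\bigr)=\log_2\bigl(1+2^{-2\epsilon}\bigr)$ is the correct way to see that the bound decays (exponentially) in $\epsilon$; the paper instead applies $\log_2(1+x)\le x/\ln 2$ with $x=2^{2\epsilon}\ge 1$, which produces the quantity $-2\epsilon+2^{2\epsilon}/\ln 2$ in Eq.~\eqref{eq:IPellBound} --- an expression that \emph{grows} with $\epsilon$ and never drops below $\approx 1.44$, so the paper's stated conclusion that it can be made at most $\gamma$ by choosing $\epsilon=O(\gamma)$ does not follow from that line as written. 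Your version repairs this. Second, you flag that \eqref{eq:HaarPureBound} is a statement about the Haar \emph{expectation} for each fixed subregion, whereas the theorem asserts a property of a drawn unitary simultaneously for all $\binom{N}{\ell}$ subregions; your concentration-plus-union-bound step (or the Markov argument on $\mathrm{Tr}\rho_{RC}^2$) is needed to close that gap, and the paper omits it entirely. Neither addition changes the architecture of the argument, but both are required for the proof to actually establish the theorem as stated.
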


\begin{proof}
Set $\ell = \frac{N}{2} - \epsilon$ in Eq.~\eqref{eq:HaarPureBound}.  
The linear term becomes
\[
2\ell - N = -2\epsilon,
\]
and the exponent in the logarithmic term becomes
\[
N - 2\ell = 2\epsilon.
\]
Thus,
\begin{equation}
    I(P(\ell))
    \;\le\;
    -2\epsilon
    \;+\;
    \log_2\!\big(1 + 2^{\,2\epsilon}\big).
    \label{eq:IPellIntermediate}
\end{equation}
Using $\log_2(1+x) \le x/\ln 2$,
\begin{equation}
    I(P(\ell))
    \;\le\;
    -2\epsilon
    \;+\;
    \frac{2^{2\epsilon}}{\ln 2}.
    \label{eq:IPellBound}
\end{equation}
For any target accuracy $\gamma > 0$, choose $\epsilon = O(\gamma)$ so that RHS~\eqref{eq:IPellBound} is at most $\gamma$.  
Therefore,
\[
    I(R : P(b)) \;=\; I\!\left(P\!\left(\tfrac{N}{2}-\epsilon\right)\right) \;\le\; \gamma,
\]
establishing the relaxed secrecy condition.

Since the global state on $RP(N)$ is pure, for all $\ell$ we have the identity
\begin{equation}
    I(P(\ell)) + I(R : P(N-\ell)) = I(R:S) = 2.
\end{equation}
Hence,
\[
I(R : P(N-\ell)) \;\ge\; 2 - \gamma,
\]
which is the relaxed recoverability requirement for authorized sets.  
Thus the authorized and unauthorized regions are characterized by
\begin{equation}
    b = \frac{N}{2} - \epsilon,
    \qquad
    g = \frac{N}{2} + \epsilon,
\end{equation}
completing the proof.
\end{proof}

The relaxed secrecy condition guarantees that any region $P(b)$ of size at most $b$ is nearly uncorrelated with the reference system.  
Using quantum Pinsker's inequality,
\begin{equation}
    \frac{1}{2\ln 2}
    \bigl(
        \Delta(\rho_{RP(b)}, \rho_{R}\!\otimes\!\rho_{P(b)})
    \bigr)^2
    \;\le\;
    I(R:P(b))
    \;\le\;
    \gamma,
\end{equation}
we obtain
\begin{equation}
\Delta(\rho_{RP(b)}, \rho_{R}\!\otimes\!\rho_{P(b)})
    \;\le\;
    O(\sqrt{\gamma}),
\end{equation}
showing that unauthorized parties learn essentially nothing about the secret.

Finally, for the pure Haar-scrambled case, the scheme gap is
\[
    G = g - b = 2\epsilon,
\]
which is independent of $N$, and the rampiness is
\[
    R = \frac{G}{N} = \frac{2\epsilon}{N}.
\]
Thus the rampiness decays polynomially in $N$, and the scheme becomes increasingly sharp as the system size grows. In Figure \ref{fig:haarPure1}, we have plotted the scheme when Haar scrambling a single qubit into a system of size 12, and in Figure \ref{fig:haarPure2}, into a system of other different sizes. One interesting observation is that while we increase the system size, the gap remains constant and is independent of the system size. This implies that the rampiness drops polynomially, as shown in figure \ref{fig:rampiness}. Thus, the secret sharing protocol improves as the system size increases.

\begin{figure}
    \centering
    \includegraphics[width=0.7\textwidth]{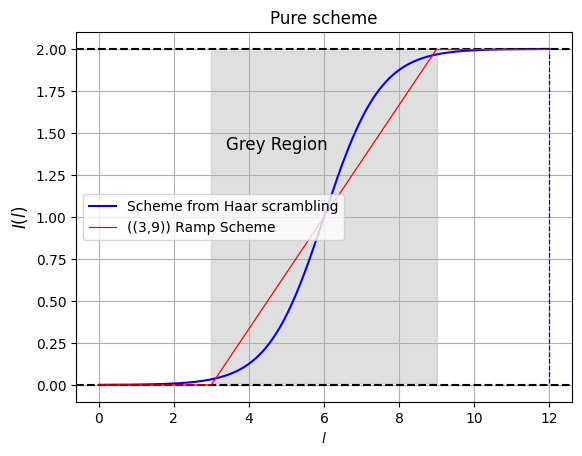}
    \caption{Mutual information between quantum secret and subsystem/parties of size $l$ for Haar scrambling and what we expect from the ramp scheme when the parties are in a pure state. Haar scrambling approximately behaves like a quantum ramp secret sharing scheme.  In figure \ref{fig:haarPure1}, we have plotted the mutual information between the reference qubit $R$ and $l$ arbitrary qubits in the system of size $12$ after applying Haar scrambling. The curve for Haar scrambling approximates the $((3,9))$ ramp scheme with a grey area when $l$ is between $3$ and $9$, which gives the gap $G = 6$ and rampiness $R= 0.5$.}
    \label{fig:haarPure1}
\end{figure}

\begin{figure}
    \centering
   \begin{subfigure}[b]{0.8\textwidth}
   \centering
         \includegraphics[width=\textwidth]{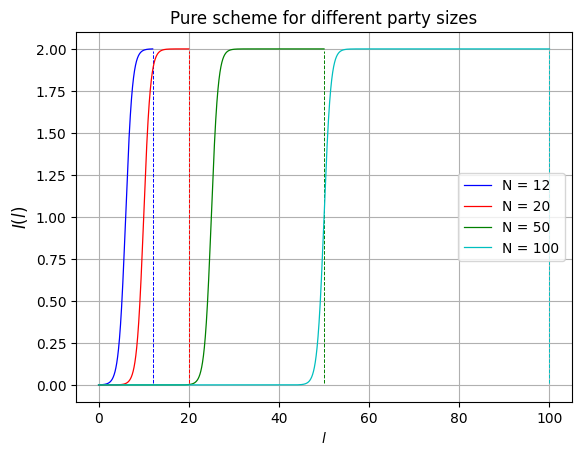}
         \caption{Pure schemes with different party sizes}
         \label{fig:haarPure2}
   \end{subfigure}
   \hfill
    \begin{subfigure}[b]{0.8\textwidth}
   \centering
         \includegraphics[width=\textwidth]{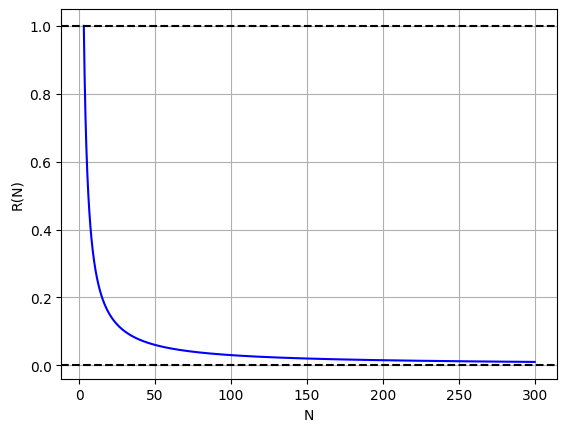}
         \caption{Rampiness with increasing party size $N$}
         \label{fig:rampiness}
   \end{subfigure}
   \caption{a: Quantum ramp secret sharing schemes from Haar scrambling for different party sizes for the pure case. As the party sizes increase, the gap becomes small compared to the total size. b: Rampiness drops down as the system/players' size increases. 
   This implies that secret sharing schemes become more effective for larger systems. }
    \label{fig:quantumSecretSharingb}
\end{figure}

\subsection{Ramp secret sharing from Haar scrambling: Mixed state case}

We now generalize the pure-state analysis to the case where the players
$\mathcal{P} = \{P_1,\ldots,P_m\}$ begin in a mixed state with nonzero
entropy $s(\mathcal{P})$.  
Let $N = |\mathcal{P}|$ denote the number of players, and let
$s(\mathcal{P})$ denote the von Neumann entropy of their joint state before
scrambling.  
For pure states ($s(\mathcal{P}) = 0$), Theorem~\ref{theorem:purestate}
shows that Haar scrambling implements an approximate ramp secret-sharing
scheme with parameters
\[
b_{\rm pure} = \frac{N}{2} - \epsilon,
\qquad
g_{\rm pure} = \frac{N}{2} + \epsilon.
\]
In the mixed case, we will see that the ramp parameters are shifted by the
initial entropy $s(\mathcal{P})$, yielding the more general expressions
\begin{equation}
\begin{aligned}
       g &= \frac{N + s(\mathcal{P})}{2} + \epsilon, \\
       b &= \frac{N + s(\mathcal{P})}{2} - \epsilon,
\end{aligned}
\end{equation}
which reduce to the pure-state values when $s(\mathcal{P})=0$.

The central difficulty in the mixed-state setting is that the key identity
used in the pure case,
\begin{equation}
    I(P(\ell)) + I(R : P(N-\ell)) = I(R:S),
    \label{eq:pureComplementary}
\end{equation}
no longer holds.  
For mixed initial states, the corresponding relation becomes
\begin{equation}
\label{eq:mixedIneq}
    I(R : P(\ell)) + I(R : P(N-\ell)) \;\le\; I(R : S),
\end{equation}
which is insufficient for guaranteeing the recoverability condition:
even if $I(R:P(\ell))$ is made small, Eq.~\eqref{eq:mixedIneq}
does not imply that $I(R:P(N-\ell))$ is large.
Thus, a direct analogue of the pure-state proof fails for mixed states,
and we cannot establish recoverability using mutual-information
bounds alone.\footnote{A detailed examination of the mixed-state
R\'enyi mutual information appears in Appendix~\ref{app:mixed}.}

Nevertheless, a complete characterization of mixed-state ramp schemes can
be obtained using a structural fact:any mixed-state encoding can be realized as a pure-state encoding
followed by discarding a subset of shares ~\cite{r7,r8}.  
Equivalently, any mixed state on $N$ players with entropy
$s(\mathcal{P})$ can be purified by introducing $s(\mathcal{P})$
additional qubits that were subsequently traced out.
This observation allows us to lift the pure-state theorem to the mixed
setting.

\begin{theorem}
\label{theorem:mixedstate}
A Haar-random scrambling unitary $U$ implements an approximate quantum
ramp secret sharing scheme with parameters
\[
\bigg(
b = \frac{N + s(\mathcal{P})}{2} - \epsilon,
\quad
g = \frac{N + s(\mathcal{P})}{2} + \epsilon
\bigg),
\]
where $s(\mathcal{P})$ is the von Neumann entropy of the players'
initial state.
\end{theorem}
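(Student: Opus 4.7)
The plan is to reduce the mixed-state case to the pure-state Theorem~\ref{theorem:purestate} by purifying the initial state of the players. I would introduce an ancillary system $E$ of size $s(\mathcal{P})$ qubits such that $\mathcal{P}E$ is in a pure state whose marginal on $\mathcal{P}$ equals the given mixed state. Together with the reference $R$ purifying the secret, the initial joint state on $R \cup S \cup \mathcal{P} \cup E$ is then pure, and since the Haar scrambling $U$ acts on $S\mathcal{P}$ and trivially on $E$, the state on $R$ together with the enlarged player system $\mathcal{P}\cup E$ of total size $N' = N + s(\mathcal{P})$ remains pure after scrambling. This is the ``purification equals extra pure shares'' picture invoked before the theorem.

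For the secrecy requirement, I would first establish the mixed-case analogue of Eq.~\eqref{eq:HaarPureBound}. The Haar average of the second R\'enyi entropy of $\rho_{RC}$ can be computed from the standard second-moment Haar integral formulas in Appendix~\ref{app:HaarIntegrals}; the initial state of $B$ enters only through its purity $\operatorname{tr}\rho_B^2 = 2^{-s(\mathcal{P})}$, and the resulting bound takes the same functional form as Eq.~\eqref{eq:HaarPureBound} with $N$ replaced by the effective size $N + s(\mathcal{P})$. Choosing $\ell = (N+s(\mathcal{P}))/2 - \epsilon$ and repeating the estimate from the proof of Theorem~\ref{theorem:purestate} then yields $I(R:A) \le \gamma$ for every $A \subseteq \mathcal{P}$ with $|A| \le b$, which is the relaxed secrecy requirement.

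For the recoverability requirement, I would exploit the purified description. Since the state on $R \cup (\mathcal{P}\cup E)$ is pure, for any subset $T$ of the enlarged system with complement $\bar T$ we have the standard identity
\begin{equation*}
    I(R:T) + I(R:\bar T) \;=\; 2\,S(R) \;=\; I(R:S).
\end{equation*}
For any $A \subseteq \mathcal{P}$ with $|A| \ge g = (N+s(\mathcal{P}))/2 + \epsilon$, the complement $\bar A$ in the enlarged system has size $N' - |A| \le b$, so the secrecy bound from the previous step applied to $\bar A$ yields $I(R:\bar A) \le \gamma$. Consequently $I(R:A) \ge I(R:S) - \gamma$, which is the relaxed recoverability condition.

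The main obstacle is the first step: the Haar unitary $U$ acts on the $N$-qubit subsystem $S\mathcal{P}$ but not on the purification $E$, so the enlarged $(N+s(\mathcal{P}))$-qubit system is not itself Haar random and Theorem~\ref{theorem:purestate} cannot be invoked verbatim on it. I therefore need to verify directly that the initial purity $2^{-s(\mathcal{P})}$ of $\rho_B$ enters the second-moment calculation in precisely the right way to shift the effective system size from $N$ to $N+s(\mathcal{P})$. The single-qubit instance of this computation is already visible in Eq.~\eqref{eq:R'enyiMutualInformation}, where $s$ enters only through $2^{-s}$ and the bound transitions from near-zero to near-maximal exactly at $p = (N+s)/2$; the general case reduces to the same second-moment Haar identities and is the quantitative content of the structural purification fact invoked before the theorem.
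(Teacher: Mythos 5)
Your overall strategy differs from the paper's. The paper does not keep the Haar unitary acting only on the $N$ physical qubits and then reason about an untouched purification; instead it \emph{defines} the mixed-state scheme structurally, as a pure-state scheme on $N' = N + s(\mathcal{P})$ qubits that are \emph{all} Haar-scrambled, followed by discarding $s(\mathcal{P})$ of the output shares. Theorem~\ref{theorem:purestate} then applies verbatim to the enlarged system, and any retained set $A\subseteq\mathcal{P}$ inherits its authorized or unauthorized status because discarding the other shares leaves $\rho_{RA}$ unchanged. You correctly observe that in your setup the enlarged $(N+s(\mathcal{P}))$-qubit system is not Haar random --- that is exactly the obstruction the paper's purify-and-discard construction is designed to sidestep, rather than to patch by direct computation.

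The genuine gap is in your recoverability step. You apply ``the secrecy bound from the previous step'' to $\bar A = (\mathcal{P}\setminus A)\cup E$, but that bound was derived (via the second-moment Haar integral with $\operatorname{Tr}\rho_B^2 = 2^{-s(\mathcal{P})}$) only for subsets of the scrambled output $\mathcal{P}$; the set $\bar A$ contains the ancilla $E$, which the unitary never touches and which is initially correlated with $B$. Decoupling of $R$ from a region that includes $E$ is a different statement --- a decoupling inequality with quantum side information, requiring a separate evaluation of $\int dU\,\operatorname{Tr}\rho_{RCE}^2$ --- and it does not follow from $I(R:C)\le\gamma$ for $C\subseteq\mathcal{P}$ together with $I(R:E)=0$, since one cannot add these to bound $I(R:CE)$. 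This is precisely the failure mode the paper flags in Appendix~\ref{app:mixed}: the direct information-theoretic route yields secrecy but not recoverability. Your idea of taking complements inside the purified system so as to restore the exact identity $I(R:T)+I(R:\bar T)=I(R:S)$ is a good one and could likely be completed with the full decoupling theorem, but as written the key estimate $I(R:\bar A)\le\gamma$ is asserted rather than proved.
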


\begin{proof}
Consider a purification of the mixed initial state of the players
$\mathcal{P}$.  
Let $N'$ denote the number of qubits in the purified system before any
discarding occurs.  
Tracing out $s = s(\mathcal{P})$ qubits yields the mixed state on
$N = N' - s$ players.

Applying Theorem~\ref{theorem:purestate} to the purified system:
it yields a pure approximate ramp scheme with parameters
\[
b_{\rm pure} = \frac{N'}{2} - \epsilon,
\qquad
g_{\rm pure} = \frac{N'}{2} + \epsilon.
\]
Now discard $s$ of the $N'$ output shares.
Discarding cannot increase the amount of information available to
an unauthorized set, nor can it improve recoverability for an
authorized set.  
Thus the ramp parameters $(b_{\rm pure},g_{\rm pure})$ remain valid as long
as $g_{\rm pure} \le N$, i.e.\ as long as every authorized set remains
contained in the retained system.

Since $N' = N + s(\mathcal{P})$, we obtain
\[
b = \frac{N + s(\mathcal{P})}{2} - \epsilon,
\qquad
g = \frac{N + s(\mathcal{P})}{2} + \epsilon,
\]
establishing the claimed mixed-state parameters.
\end{proof}

As consistency checks:
\begin{itemize}
\item If $s(\mathcal{P}) = 0$ (pure players), we recover the scheme of
Theorem~\ref{theorem:purestate}.
\item If $s(\mathcal{P}) = N$ (maximally mixed players), then
$g = b = N$, yielding a trivial scheme $((N,N))$ in which all $N$
players are required to reconstruct the secret.
\end{itemize}

The maximally mixed case can also be seen directly from the R\'enyi
mutual information.  
Substituting $s(\mathcal{P})=N$ into Eq.~\eqref{eq:R'enyiMutualInformation}
gives
\begin{equation}
       I^{(2)}(\ell)
       = \log_2\!\bigl(1 + 3 \cdot 4^{\,\ell - N}\bigr).
\end{equation}
At $\ell = N$, we obtain $I^{(2)} = 2$, the maximum possible for a
single-qubit secret, while for $\ell < N$ the mutual information decays
exponentially in $N-\ell$.
Thus only the full set of $N$ players can recover the secret.

Figure~\ref{fig:mixedScheme1} illustrates how discarding different numbers
of players (equivalently: increasing $s(\mathcal{P})$) transforms the pure
scheme into different mixed-state ramp schemes.  
By varying $s(\mathcal{P})$ and the approximation parameter $\epsilon$,
one can realize all possible ramp secret-sharing parameters.

\begin{figure}
    \centering
         \includegraphics[width=0.7 \textwidth]{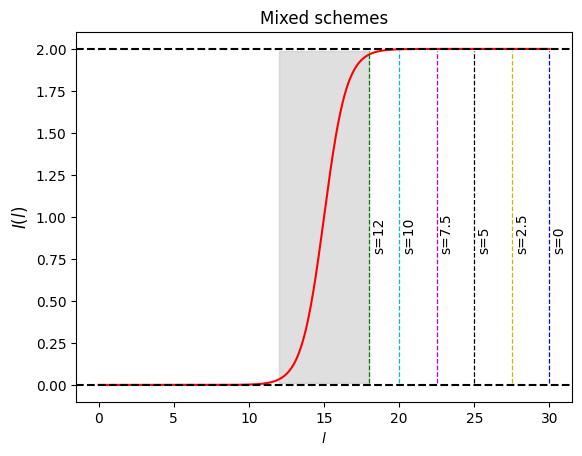}
   \caption{Mixed scheme can be thought of as a pure scheme with some players/systems discarded. $S= 0$ corresponds to the pure case where no players are discarded, while $s= 12$ corresponds to twelve players discarded. While the ramp parameters are the same, the total system size changes as the players are discarded. This formalism can also be thought of as a scheme where players have different entropies. By changing the entropy, one can then obtain all possible ramp schemes. }
   \label{fig:mixedScheme1}

\end{figure}

\subsection{Complexity Analysis}
In this section, we will perform the complexity analysis of our protocol. So far, we have assumed the Haar random unitaries for implementing the information scrambling circuit. Unfortunately, a simple counting argument shows that it requires an exponential number of gates, $\mathcal{O}(4^n)$, to implement the $n$ qubit Haar unitary \cite{knill1995approximation}. Furthermore, if we want to implement it with local $1$ and $2$ qubit gates, then it would take  $\mathcal{O}(n^22^{2n})$ gates \cite{Haar1}. This gives the impression that implementing this protocol efficiently is almost impossible. Fortunately, this is not true. For a unitary to be scrambling, it doesn't have to be a perfect Haar unitary. In fact, very simple operators can actually be pretty good scramblers. Both Haar-random and unitary design states, for $t \geq 2$, exhibit near-maximal entanglement. This implies that the state one obtains from t-design circuits \cite{Gross_2007} is information theoretically indistinguishable from a Haar scrambling circuit up to t moments \cite{ambainis2007quantum}. Since we used an information-theoretic argument to obtain ramp secret sharing schemes from Haar scrambling, for our purpose, t-design circuits are already enough. An example of a perfect 2-design circuit is provided by the ensemble of operators that form the Clifford group \cite{Dankert_2009}. This is a particularly useful result for us because Clifford circuits can be obtained by using only $\mathcal{O}(n^2)$ one and two-qubit gates, which means our proposed protocol can be performed efficiently, i.e., with a polynomial number of gates.

There are several comments now in order. Gottesmann Knill's theorem \cite{gottesman1998heisenberg} implies that to get any quantum advantage, one needs to use gates outside of the Clifford group, as a circuit involving only Clifford gates can be efficiently simulated by a classical computer. However, this is true only for computational tasks, and in several areas, such as quantum communication and quantum cryptography, protocols involving only Clifford group gates are very prominent. For example, quantum teleportation and superdense coding use only Clifford gates. For the NISQ Era, when errors are particularly prone, quantum cryptography protocols like the one we propose may be a promising quantum application, as Clifford gates are easier to correct errors compared to non-Clifford ones. The second comment is that while unitary design circuits are enough for our purpose, pseudorandom quantum states, on the other hand, are not. Quantum pseudorandom states are efficiently constructible states that nevertheless masquerade as Haar-random states to polynomial-time observers. Therefore, pseudorandom quantum states are not information-theoretically equivalent to Haar-random unitaries and are only equivalent from the perspective of observers with polynomial time complexity. In fact, it is possible to construct pseudoranodm states with only polylogarithmic entanglement entropy across an equipartition of the qubits \cite{aaronson2023quantum}. Therefore, these pseudorandom unitaries lack sufficient scrambling features to implement the secret sharing protocol.

Now that we have discussed the complexity of information scrambling, let us discuss the complexity of the descrambling. Cryptography involves both encoding and decoding secrets, which in our framework refers to scrambling and descrambling quantum information. Fortunately, quantum mechanics is unitary, and if we have the knowledge about how the information was scrambled, in principle, we should be able to descramble it.  It was shown that if $C$ is the complexity of the scrambling circuit, then one can construct the descrambling/decoding circuit of the complexity order $d_S C$, where $d_S$ is the Hilbert space dimension of the quantum secret \cite{yoshida2017efficient}. For our protocol, where scrambling is implemented using only a polynomial number of Clifford gates, efficient descrambling is also possible, that is, using only a polynomial number of gates. While it is true that one needs to be provided with the encoding unitary to perform the  decoding, one can sometimes do better. For example, one can perform the tomography to learn enough about the circuit and decode it. This is particularly true for the Clifford circuit, for which even without a scrambling unitary, we can perform tomography and learn enough to perform the decoding operation. In \cite{oliviero2022black,leone2023learning}, it was shown that this is possible even for certain families of non-Clifford circuits which are quasi-chaotic in nature. In particular, they showed that if the scrambling unitary $U_t$ is a $ t$-doped Clifford circuit, it is possible to learn the scrambling circuit using only $\mathcal{O}(\text{poly}(n)\exp (t))$ queries. Interestingly, the decoder can still be a family of Clifford gates only. This fails if the scrambling circuit is fully chaotic, and then it is not possible to learn the circuit efficiently. Being able to perform the full tomography doesn't affect our protocol, as in quantum secret sharing schemes, it is possible to allow the parties to have full knowledge of how encoding as well as decoding is performed. 

\begin{theorem}
The protocol, both encoding and decoding, can be implemented efficiently, i.e., using only polynomial number, $\mathcal{O}(n^2)$, one and two qubit gates. 
\end{theorem}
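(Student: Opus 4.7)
The plan is to decompose the statement into two claims, the encoding complexity and the decoding complexity, and bound each one separately using the information-theoretic nature of the ramp secret sharing argument already established. The key observation I would lean on is that the proofs of Theorem~\ref{theorem:purestate} and Theorem~\ref{theorem:mixedstate} only invoke the Haar measure through the second R\'enyi entropy bound in Eq.~\eqref{eq:R'enyiBound}. This quantity is a polynomial of degree two in $\rho$ and hence degree four in the unitary entries, so its expectation over $U$ depends only on the first two moments of the ensemble. Thus, any unitary $2$-design yields the same secrecy and recoverability bounds up to the same $\epsilon$, $\gamma$, $\delta$ as a genuinely Haar-random unitary.

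For the encoding direction I would then instantiate the design with the Clifford group, which is known to be an exact unitary $2$-design, and invoke the standard Aaronson--Gottesman-type decomposition showing that any $n$-qubit Clifford operator admits a circuit of $\mathcal{O}(n^2)$ one- and two-qubit gates. Combining this with the previous paragraph, I would conclude that the encoding isometry of our ramp scheme can be realized with $\mathcal{O}(n^2)$ gates while preserving all the ramp parameters $(b,g)$ derived in Theorems~\ref{theorem:purestate} and~\ref{theorem:mixedstate}. A short remark is needed to check that the ``initial mixed state'' case is handled too: since the proof of the mixed case reduces to the pure case by purification and then discarding shares, and since design-based scrambling commutes with partial trace, the $\mathcal{O}(n^2)$ bound is preserved.

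For the decoding direction, the plan is to cite the Yoshida--Kitaev result that, given a scrambling encoder of gate complexity $C$ acting on a secret of Hilbert-space dimension $d_S$, a decoder of complexity $\mathcal{O}(d_S \cdot C)$ can always be constructed. Since we are sharing a constant-size quantum secret (for example, a single qubit, for which $d_S=2$), this factor is $\mathcal{O}(1)$, and the decoder inherits the polynomial complexity of the encoder, giving $\mathcal{O}(n^2)$ gates as well. I would also note, following the discussion preceding the theorem, that even without being handed the encoder explicitly, tomographic learning of Clifford or $t$-doped Clifford circuits can be carried out in polynomial time, so the decoder remains efficient under weaker assumptions about shared information between the parties.

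The step I expect to be the main obstacle is the first one: making rigorous the claim that the information-theoretic ramp parameters are preserved when replacing the Haar measure by a $2$-design. One has to be careful that the bound in Eq.~\eqref{eq:HaarPureBound} is derived from $\mathbb{E}[\operatorname{Tr}\rho_{RC}^2]$, which is a balanced degree-$(2,2)$ polynomial in $U$ and $U^\dagger$, so a genuine $2$-design matches the Haar value exactly; for an approximate design the bound picks up only a small additive error that can be absorbed into $\epsilon$. Once that moment-matching lemma is written down cleanly, the rest of the argument is a straightforward assembly of known gate-count results for the Clifford group and the decoding-complexity theorem of~\cite{yoshida2017efficient}.
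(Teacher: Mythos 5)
Your proposal follows essentially the same route as the paper: replace the Haar unitary by a unitary $2$-design on the grounds that the ramp parameters are derived from second-R\'enyi (i.e., second-moment) quantities, instantiate the design with the Clifford group to get an $\mathcal{O}(n^2)$ encoder, and invoke the Yoshida--Kitaev $d_S\cdot C$ decoding-complexity result (with $d_S$ constant) for the decoder. Your explicit moment-matching lemma is in fact a somewhat more careful statement of the step the paper asserts informally, but the argument is the same.
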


\section{Applications}

In this section, we will briefly discuss potential applications of our work, spanning from establishing secure quantum networks to evaluating cryptographic tasks in the NISQ Era.
\subsection{Information Scrambling protocol in Quantum Network and Security Implications}
\begin{figure}
    \centering
    \includegraphics[width=1.0\textwidth]{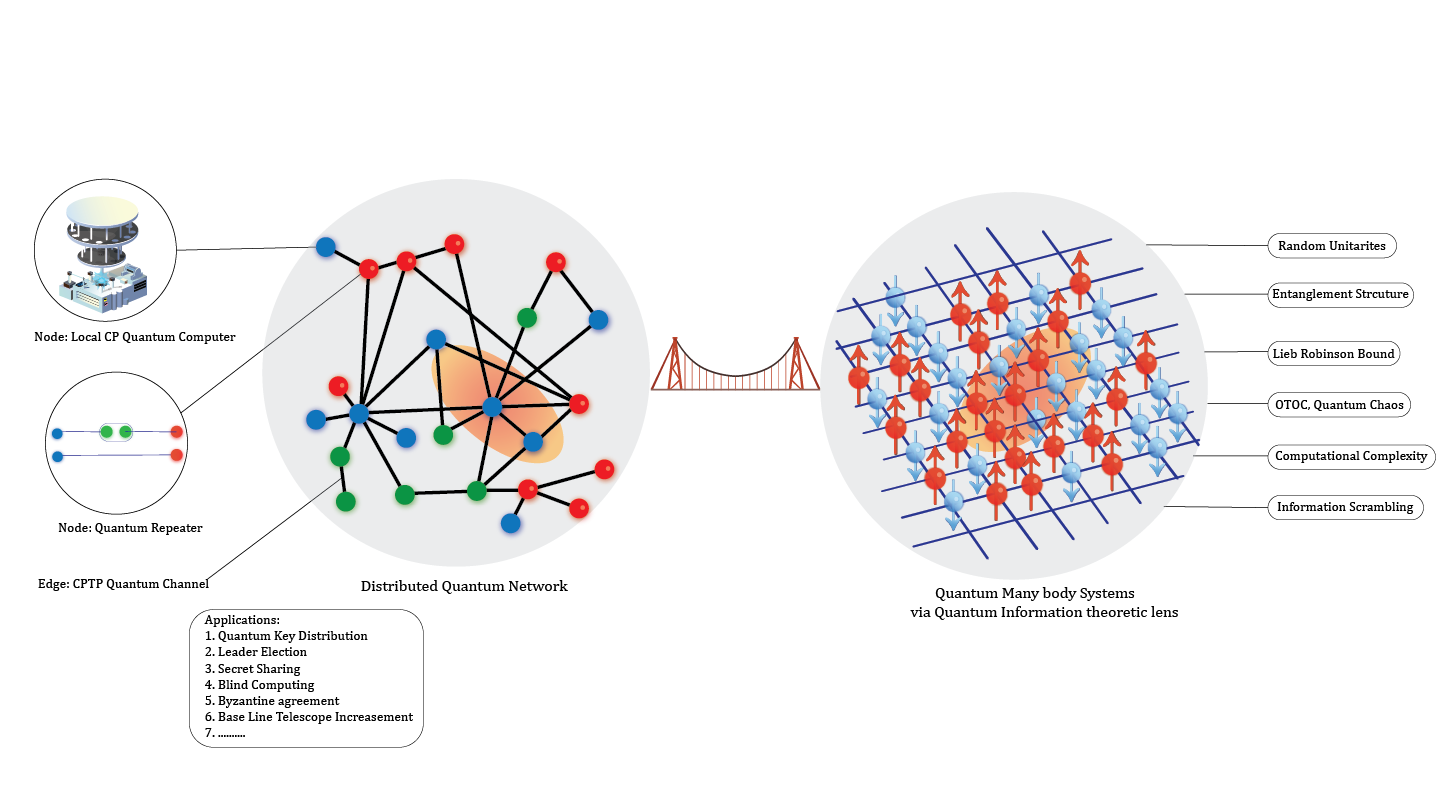}
    \caption{A bridge between distributed quantum networks and quantum many-body systems using the language of quantum
information theory. There are several interesting results obtained in the field of quantum many-body systems using quantum
information theory, such as Random unitaries, entanglement structure, quantum chaos, information scrambling, and phase transitions
 and so on. Thinking of quantum networks as a many-body system, similar techniques can be implemented to design
new protocols. For example, Quantum chaos could be used to model hacking of the quantum network, phase transitions in
entanglement structure can be used to study maximally entangled network clusters,}
    \label{fig:bridge}
\end{figure}
Classical communication networks serve as the foundation of our modern society. Alongside classical data, it is also possible to exchange quantum data over a long distance. These large-scale quantum networks \cite{qcn, wehner, singh2021quantum, adhikari2023quantuminformationspreadingscrambling} offer novel capabilities, such as key distribution, clock synchronization, and increasing the baseline of telescopes, among others. One crucial question in the field of quantum networks is to find killer applications to justify actually building it. In this context, the utilization of information scrambling protocols opens up exciting possibilities for exploring the unique features of quantum networks, leading to novel security implications.
We observed that by varying the purity of parties, one can construct all possible types of secret sharing schemes. Therefore, it is possible to construct a scrambling protocol such that the malicious party would need to get access to a significant fraction of the network to get any information
out of the network. This theme can be further extended to establish a connection between two distinct fields: distributed quantum networks and quantum many-body systems, as depicted in Figure \ref{fig:bridge}.

\subsection{Benchmarking and cryptography protocols in NISQ Computers}
In the NISQ Era \cite{Preskill_2018}, focus is often given to optimization and simulation problems, which, unfortunately, require many qubits. In contrast, cryptographic protocols require much fewer qubits,
such as quantum key distribution, quantum teleportation, quantum money, and superdense coding. One can implement our protocol in the NISQ Era with low-depth circuits, which can be used for benchmarking and finding novel quantum cryptography applications.

\subsection{Fundamental physics perspective}
Quantum information scrambling is emerging as an essential tool in exploring quantum many-body dynamics, randomization
and benchmarking, quantum computing protocols, the onset of quantum chaos, and even in black-hole physics \cite{Adhikari_2022, Adhikari_2023, Adhikari_2024, Chhetriya:2025ndi}. By establishing a connection between information scrambling and secret sharing, applying quantum secret-sharing tools can offer fresh insights into these areas. 
For example, employing the concept of dividing the quantum system into authorized and unauthorized parties allows us to tell exactly where the quantum information is located. When modeling an old black hole as a scrambling unitary \cite{Page_1993, Hayden_2007}, one can deduce that after emitting $d$ Hawking radiation, where $d$ represents the size of the diary fallen into the black hole, the black hole essentially becomes an unauthorized set and can no longer retain any information about the diary. All the information about the diary should be in Hawking radiation, as it is an authorized set.

\section{Conclusion}
We have demonstrated that Haar scrambling features possess quantum secret sharing properties. Rather than having the perfect threshold
scheme, the secret sharing feature of Haar scrambling is closer to that of approximate ramp secret sharing schemes. By changing the purity of the initial state, one can then achieve all possible schemes. Furthermore, using complexity theoretic arguments,
we have shown that the protocol can be implemented efficiently just by using a polynomial circuit. Besides this, our work has
several potential applications, such as building secure quantum internet protocols, benchmarking and cryptography tasks, and
lessons for fundamental physics. There are several interesting open questions one could explore. One could work out the 
applications in full detail and find other applications too. One could also extend this work to the noisy cases. While
our model has inherent error correcting features, as one could still reconstruct the secret while losing qubits, provided they are
authorized, it would be very hard to decode the secret out if we don’t know the scrambling unitary, which would be the case
for the  noisy regime. Therefore, analyzing this model in the presence of noise is important. Another important question is of a more
theoretical nature such as finding the relation between $l$- scrambling and OTOCs, building up theory of quantum ramp secret
sharing, and exploring secret sharing features of other quantum systems

\section{Acknowledgements}
The research is part of the Munich Quantum Valley, which is supported by the Bavarian state government with funds from the Hightech Agenda Bayern Plus.

\onecolumn
\appendix

\section{Toy model of Figure \ref{fig:informationScrambling2}} 
\label{app:toymodel}


In Figure \ref{fig:informationScrambling2}, we start with a system of $N$ qubits that has a total Hilbert space of $2^N$. The time evolution of an initial state $\ket{\psi}$ can be described by a unitary operator $U(t)$ as follows:
\begin{equation}
\ket{\psi (t)} = U(t) \ket{\psi}
\end{equation}
In order to have an information theoretic setting, we assume that Alice and Bob aim to communicate via the scrambling unitary $U$. Additionally, Alice controls the first qubit $q_A$, while Bob has access to a set of qubits $B$ (see Fig. \ref{fig:informationScrambling2}). To describe a general scenario, we consider the initial state to be mixed:
\begin{equation}
\begin{split}
    \rho_0 &= \ket{0} \bra{0} \otimes \rho \\
    \rho_1 &= \ket{1} \bra{1} \otimes \rho
\end{split}    
\end{equation}
Here, $\ket{0}$ and $\ket{1}$ denote two possible orthogonal states of Alice's qubit $q_A$, and $\rho$ is the density matrix of the system excluding Alice's qubit. Alice's qubit is entangled with an external reference system $R$, and an external system called "Memory" purifies the density matrix $\rho$ (as depicted in Fig. \ref{fig:informationScrambling2}). We represent this purified state as $\ket{\sqrt{\rho}}$, and by tracing out the memory system, we can retrieve the density matrix $\rho$:
\begin{equation}
\rho = \text{Tr}_{\text{MEM}} \ket{\sqrt{\rho}} \bra{\sqrt{\rho}}
\end{equation}
The initial state of the system, including the reference, system and the memory is:
\begin{equation}
     \ket{\boldsymbol{\psi}} = \frac{1}{\sqrt{2}}(\ket{0}_R\ket{0}_{q_A} +\ket{1}_R\ket{1}_{q_A}) \ket{\sqrt{\rho}}
\end{equation}
Since the time evolution unitary doesn't act on reference and memory, the time evolved state is given by:
\begin{equation}
\begin{split}
  \ket{\boldsymbol{\psi}(t)}  &= \frac{1}{\sqrt{2}} (\ket{0}_R U_S \otimes I_\text{MEM} \ket{0,\sqrt{\rho}}  +  \ket{1}_R U_S \otimes I_\text{MEM} \ket{1,\sqrt{\rho}} )
\end{split}    
\end{equation}

\section{Haar Integrals}
\label{app:HaarIntegrals}

An integral over the unitary group $U$ ($2^n \times 2^n$) of the matrix function $f(U)$ with respect to the Haar measure can be represented as:
\begin{equation}
I = \int dU f(U)
\end{equation}
The defining property of the Haar measure is left- (respectively, right-) invariance with respect to shifts via multiplication. If $V$ is a fixed unitary, then:
\begin{equation}
\int dU f(UV) = \int d(U'V^\dagger) f(U') = \int dU'f(U')
\end{equation}


In order to properly analyze the setting, we divide the input state as $AB$ and output state $CD$. Furthermore, the unitary $U$ is taken to be random $2^n \times 2^n$ matrix taken from Haar ensemble. The Haar
average lets us consider expectations over a number of unitary matrices and is non-zero
only when the number of $U$ equals the number of $U^\dagger$.
For example with the case of two $U$s and two $U^\dagger$s, we get:
\begin{equation}
\label{eq:haarintegral}
\begin{aligned}
       \int dU U_{i_1j_1}U_{i_2j_2}U^*_{i'_1j'_1}U^*_{i'_2j'_2} = \frac{1}{2^{2n}-1}&  (\delta_{i_1 i'_1}\delta_{i_2 i'_2}\delta_{j_1 j'_1}\delta_{j_2 j'_2} + \delta_{i_1 i'_2}\delta_{i_2 i'_1}\delta_{j_1 j'_2}\delta_{j_2 j'_1})\\
       &- \frac{1}{2^n(2^{2n}-1)} (\delta_{i_1 i'_1}\delta_{i_2 i'_2}\delta_{j_1 j'_2}\delta_{j_2 j'_1}  + \delta_{i_1 i'_2}\delta_{i_2 i'_1}\delta_{j_1 j'_1}\delta_{j_2 j'_2})
\end{aligned}
\end{equation}
Let $d= d_Ad_B = 2^a 2^b= d_Cd_D = 2^c2^d = 2^{a+b} = 2^{c+d} = 2^n$, then $d = 2^n$.

This formula will let us compute the average over the trace of the square of the density matrix $\rho_{AC}$:
\begin{equation}
    \int dU \text{Tr}(\rho^2_{AC}) = \frac{1}{2^{2n}} \int dU U_{klmo}U^*_{k'lm'o}U_{k'l'm'o'} U^*_{kl'mo'}
\end{equation}
where, $k = 1,...,2^a$ are $A$ indices, $l = 1,......,2^b$ are $B$ indices, $m = 1,.....,2^c$ are $C$ indices, and $o = 1,......, 2^d$ are $D$ indices. 
Now, using equation \ref{eq:haarintegral},
\begin{equation}
\begin{aligned}
    \int dU \text{Tr}(\rho^2_{AC}) &= \frac{1}{d^2}\left( \frac{1}{d^2-1}(d_Ad_B^2d_Cd_D^2 + d_A^2d_Bd_C^2d_D) - \frac{1}{d(d^2-1)}(d_Ad_B^2d_C^2d_D + d_A^2d_Bd_Cd_D^2) \right)       
\end{aligned}
\end{equation}
We can approximate above equation as:
\begin{equation}
    \int dU \text{Tr}(\rho^2_{AC}) = d_A^{-1}d_C^{-1} + d_B^{-1}d_D^{-1} - d^{-1}d_A^{-1}d_D^{-1} - d^{-1}d_B^{-1}d_C^{-1} 
\end{equation}
From these quantities, it is possible to obtain the expression for R\'enyi enropies, which allow us to obtain bound for other information theoretic quantities. 
\section{Out of time ordered correlators (OTOCs)}
\label{app:OTOCs}
To illustrate how correlators can be used to study information spreading, we consider a toy model of Alice and Bob communicating via a system of $N$ qubits. Alice controls the first qubit $q_A$, and Bob has access to a set of qubits $B$. Alice wants to send a bit $a \in {0,1}$ to Bob, which she does by flipping her qubit with the $\sigma_{q_A}^x$ operator depending on $a$. Bob then measures his qubits using $O_B$ after the system evolves for some time $t$.

 
 Alice's qubit flip affects Bob's expectation value, with $\langle O_B \rangle _0$ and $ \langle O_B \rangle _1$ denoting the respective values. By assuming that Alice and Bob repeat the measurements many times, the difference in expectation values can be bounded using the Cauchy-Schwarz inequality. This difference is bounded by the operator norm of the commutator $[\sigma_{q_A}^x (-t), O_B]$, where $\sigma^x(-t) = U \sigma^x U^\dagger$ is a Heisenberg operator. 
A simple and intuitive way to understand this phenomenon is as follows: at $t=0$, the operator $\sigma^x$ only acts on Alice's qubit and has no overlap with Bob's qubit. Therefore, the commutator between $\sigma^x(-t)$ and Bob's observable $O_B$ is initially zero. As time progresses, the support of $\sigma^x(-t)$ grows as a Heisenberg operator, and the overlap with $O_B$ increases. This results in a non-zero commutator, indicating that Bob can now gain information about Alice's qubit. This spreading of information can be bounded rigorously using the Lieb-Robinson bound, which provides an upper bound on the rate at which information can travel through a quantum system.

Now we will define quantum chaos using out-of-time order correlators (OTOCs) \cite{larkin, Maldacena_2016, Sekino_2008}, following the previous discussion of a toy model. OTOCs can be seen as the quantum version of the classical observation that the sensitivity to initial conditions can be quantified by a Poisson bracket: ${q(t),p} = \partial q(t)/ \partial q(0)$, where $q$ and $p$ are a conjugate pair. In a classically chaotic system, this quantity would be proportional to $ e^{\lambda t}$, where $\lambda$ is the Lyapunov exponent. The exponential growth of OTOCs is then defined as the thermal average of the square of the commutator $[\hat{q}(t),\hat{p}]$, obtained by quantizing ${q(t),p}$.

Generally, we consider a function involving the commutation of arbitrary operators (out-of-time ordered commutator) $C(t)$ given by:
\begin{equation}
\label{eq:otoc commutator}
\begin{aligned}
C(t) &= \langle [V(0), W(t)]^\dagger [V(0), W(t)] \rangle \\
&= 2 \text{Re} \langle W^\dagger(t)W(t)V(0)V^\dagger(0)\rangle - 2 \text{Re} \langle W^\dagger(t)V^\dagger(0)W(t)V(0) \rangle
\end{aligned}
\end{equation}
Here, $O(0) = O$, $O(t) = U^\dagger O U$, and $\langle O \rangle = \text{Tr} O \rho$ denotes an average of an operator $O$ over some set of states. Usually, the average is taken over a thermal state, $\rho \propto d^{-1}I$, where $d$ is the Hilbert space dimension. The first term in Eq. \ref{eq:otoc commutator} is the time-ordered correlator and equilibrates at the thermalization time, while the second term is the out-of-time ordered correlator (OTOCs), which we refer to as $F(t)$, and characterizes the quantum chaotic effect:
\begin{equation}
F(t) = \langle W^\dagger(t)V^\dagger(0)W(t)V(0) \rangle.
\end{equation}
OTOCs measure the overlap between two states, $F(t) = \langle \psi_B | \psi_A \rangle $, where the states are given by:
\begin{equation}
\begin{aligned}
\ket{\psi_A} &= U^\dagger W U V \ket{\psi} = W(t) V \ket{\psi}, \\
\ket{\psi_B} &= V U^\dagger W U \ket{\psi} = V W(t) \ket{\psi}.
\end{aligned}
\end{equation}

\begin{figure}
    \centering
    \includegraphics[width=1.0\textwidth]{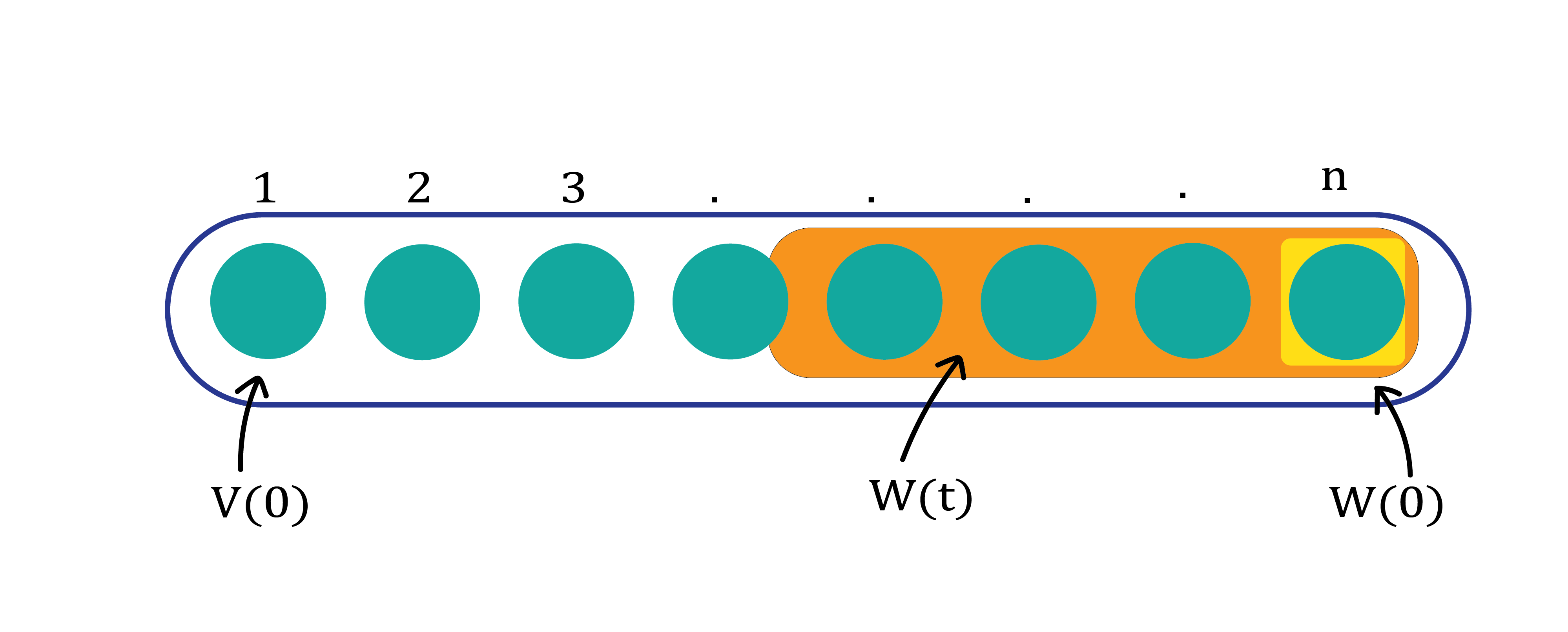}
    \caption{ Evolution of local operator $W(0)$ under local Hamiltonian given by $W(t)$. As time increases, $W(t)$ is supported on larger region of the system. Commutation between $V(0)$ and $W(t)$ is non-zero only when region supported by $W(t)$ and $V(0)$ overlaps.}
    \label{fig:chaoticScrambling}
\end{figure}
One can think of $F(t)$ as a four-point correlation function that probes the way in which (local) perturbations inhibit the cancellation between forward and backward evolution, and it diagnoses the spread of quantum information by measuring how quickly two commuting operators fail to commute. Since $W(0)$ and $V(0)$ are supported on non-overlapping subsystems, we have $ [W(0),V(0)] = 0$ thus
\begin{equation} F(t) = \langle W^\dagger(t)V^\dagger(0)W(t)V(0) \rangle = \langle W^\dagger(t)W(t)V^\dagger V(0) \rangle = 1. \end{equation} 
Under a chaotic time evolution with Hamiltonian $H$, a local operator $W(0)$ will evolve into a complicated operator with an expansion as a sum of products of many local operators: \begin{equation} W(t) = e^{iHt}We^{-iHt} = W + it[H,W] - \frac{t^2}{2!} [H,[H,W]] - \frac{it^3}{3!} [H,[H,[H,W]]]+ \ldots, \end{equation} where the ellipsis denotes higher-order terms. Assuming a generic Hamiltonian $H$, the time-evolved operator $W(t)$, starting from a local operator $W(0)$, is supported on a larger region and hence does not commute with the initially non-overlapping operator $V(0)$, i.e., $[W(t),V(0)] \neq 0$. Consequently, the four-point correlation function $F(t)$, which measures the spread of quantum information, starts to decay and eventually goes to zero. This decay behavior is depicted in the cartoon representation shown in Figure \ref{fig:chaoticScrambling}. If this decay behavior occurs, we refer to the unitary as chaotic scrambling. Intuitively, this implies that over time, a chaotic scrambling unitary requires access to a larger region to access the initially localized information.
\section{$((2,3))$ quantum threshold scheme}

\label{app:example}
Here a dealer share an arbitrary qutrit among three parties $V_{2,3}:\mathcal{C}^3 \xrightarrow{} \mathcal{C}^3 \otimes \mathcal{C}^3 \otimes \mathcal{C}^3$ defined as:
\begin{equation}
\begin{split}
        V_{2,3}(\alpha \ket{0} + \beta \ket{1} + \gamma \ket{2}) = \frac{1}{\sqrt{3}}(&\alpha (\ket{000} + \ket{111} + \ket{222} ) + \\
        & \beta ( \ket{012} + \ket{120} + \ket{201}) + \\
        & \gamma (\ket{021} + \ket{102} + \ket{210}))
\end{split}
\end{equation}
Indeed, $V_{2,3}$ is an isometry. Each resulting qutrit is taken as a share. From a single share, no information can be obtained about the secret as it is totally mixed and therefore it is independent about the secret.
\begin{equation}
\rho_1 = \frac{1}{3}(\ket{0}\bra{0} + \ket{1}\bra{1} + \ket{2}\bra{2})
\end{equation}
However, the secret can be perfectly reconstructed from any two of the three shares. For example, if we have two first shares, then the secret can be reconstucted as:
\begin{equation}
   (\alpha \ket{0} + \beta \ket{1} + \gamma \ket{2}) (\ket{00} + \ket{12} + \ket{21})
\end{equation}
So, the first qutrit now contains the secret. Here, the reconstruction procedure can be done by adding the value of the first share to the second (modulo three), and then adding the value of the second share to the first. For the other cases, we can follow the similar procedure because of it's symmetric nature. 


\section{Information theoretic analysis for the mixed states}
\label{app:mixed}

Let $P(\ell)$ be an arbitrary region of size $\ell$ in the system after the
Haar scrambling unitary is applied, and let $I(P(\ell))$ denote the mutual
information between the original secret (one qubit) and this region.  
Using the R\'enyi-based upper bound on the quantum mutual information
derived in Eq.~\eqref{eq:R'enyiMutualInformation}, we obtain an inequality
of the form
\begin{equation}
    I(P(\ell))
    \;\le\;
    F\bigl(N, s(\mathcal{P}), \ell\bigr),
    \label{eq:mixedRenyiBound}
\end{equation}
where $F$ is an explicit function depending on the total number of players
$N$, the initial entropy $s(\mathcal{P})$ of the players, and the subsystem
size~$\ell$.  Concretely, Eq.~\eqref{eq:R'enyiMutualInformation} yields
\begin{equation}
    I(P(\ell))
    \;\le\;
    1 + \log_2\!\left(
        2 - \frac{3\bigl(1 - 2^{2\ell - 2N}\bigr)}
        {2 + \bigl(2^{-s(\mathcal{P})} - 2^{-N}\bigr)\,4^{\ell - N/2}}
    \right),
\end{equation}
which reduces to the pure-state expression when $s(\mathcal{P}) = 0$.

To connect this to ramp secret sharing, it is natural to consider the values
of $\ell$ near the ``transition'' scale
\[
    \ell_\star = \frac{N + s(\mathcal{P})}{2},
\]
as the center of the ramp region for mixed-state schemes.  In particular, for
\begin{equation}
    \ell = \frac{N + s(\mathcal{P})}{2} - \epsilon,
\end{equation}
and fixed $N$ and $s(\mathcal{P})$, the function
$F\bigl(N, s(\mathcal{P}), \ell\bigr)$ can be made arbitrarily small by
choosing $\epsilon$ appropriately.  That is, for any target accuracy
$\gamma > 0$, there exists $\epsilon = \epsilon(\gamma)$ such that
\begin{equation}
    I\!\left(P\!\left(\tfrac{N + s(\mathcal{P})}{2} - \epsilon\right)\right)
    \;\le\; \gamma.
    \label{eq:mixedSecrecy}
\end{equation}
This establishes a relaxed secrecy condition for subsets of size
$\ell = \frac{N + s(\mathcal{P})}{2} - \epsilon$: they have vanishingly
small correlation with the reference as $\gamma \to 0$.

However, in the mixed-state scenario, the complementary mutual information
no longer satisfies the pure-state equality
\eqref{eq:pureComplementary}.  
Instead, one has only the inequality
\begin{equation}
    I(R : P(\ell)) + I\bigl(R : P(N-\ell)\bigr)
    \;\le\; I(R:S),
    \label{eq:mixedComplementary}
\end{equation}
which follows from strong subadditivity and the fact that the global state
is no longer pure on $RP(N)$ alone.  For
\[
    \ell = \frac{N + s(\mathcal{P})}{2} - \epsilon,
\]
the relaxed secrecy bound \eqref{eq:mixedSecrecy} gives
\begin{equation}
    I\bigl(R : P(\ell)\bigr) \le \gamma,
\end{equation}
and the inequality \eqref{eq:mixedComplementary} then implies only
\begin{equation}
    I\bigl(R : P(N-\ell)\bigr)
    \;\le\;
    I(R:S) - I(R:P(\ell))
    \;\le\;
    I(R:S),
    \label{eq:mixedUpper}
\end{equation}
which is a trivial upper bound and provides no nontrivial lower
bound on $I\bigl(R : P(N-\ell)\bigr)$.  In other words, from the direct
R\'enyi-based analysis we obtain
\begin{equation}
    I\bigl(R : \text{unauthorized region}\bigr) \approx 0,
\end{equation}
but we cannot conclude that
\begin{equation}
    I\bigl(R : \text{complementary region}\bigr) \approx 2
\end{equation}
purely from Eq.~\eqref{eq:mixedComplementary}.

Thus, this information-theoretic approach suffices to verify a
secrecy condition for the mixed-state case, namely, that suitably
small subsets have negligible mutual information with the secret, but it
fails to establish the recoverability condition: we cannot show that
large subsets necessarily contain almost all the information about the
secret.  Consequently, the direct analysis based solely on
Eq.~\eqref{eq:mixedRenyiBound} and the inequality
\eqref{eq:mixedComplementary} does not yield a complete mixed-state quantum
ramp secret-sharing scheme.

This motivates the more structural approach adopted in
Section~\ref{theorem:mixedstate}, where mixed-state schemes are obtained
from pure ones via purification and discarding of shares.  That approach
provides both secrecy and recoverability and leads to the ramp parameters
\[
    b = \frac{N + s(\mathcal{P})}{2} - \epsilon,
    \qquad
    g = \frac{N + s(\mathcal{P})}{2} + \epsilon.
\]

\bibliographystyle{quantum}
\bibliography{references}
\end{document}